\documentclass[letterpaper, 10 pt, conference]{ieeeconf}

\IEEEoverridecommandlockouts 
\makeatletter

\let\proof\@undefined
\let\endproof\@undefined
\makeatother
\let\labelindent\relax

\usepackage{jhoagg}
\usepackage{mathtools}
\usepackage{amsfonts}
\usepackage{amssymb,latexsym}
\usepackage[psamsfonts]{eucal}
\usepackage{amsthm}
\usepackage{graphicx}
\usepackage[inline]{enumitem}
\usepackage{indentfirst}
\usepackage{setspace}
\usepackage{microtype}
\usepackage{threeparttable}
\usepackage{float}
\usepackage{cleveref}
\usepackage{afterpage}
\usepackage{placeins}
\usepackage{cancel}
\usepackage{cite}
\usepackage{leftidx}
\usepackage{breqn}
\usepackage[export]{adjustbox}
\usepackage{comment}
\usepackage[ruled, linesnumbered, lined]{algorithm2e}
\usepackage[dvipsnames]{xcolor}
\usepackage{xcolor}
\usepackage{soul}
\usepackage{siunitx}
\usepackage{subcaption}
\usepackage{tabularx}
\usepackage{array} 
\usepackage{url}

\usepackage{tikz}
\usepackage{pgf}
\usepackage{pgfplots}
\usepgflibrary{plothandlers}
\usepgfplotslibrary{external} 
\pgfkeys{/pgf/number format/.cd,1000 sep={}}  
\usetikzlibrary{calc,shapes.misc,plotmarks}
\pgfplotsset{compat=1.13}

\let\originalleft\left
\let\originalright\right
\renewcommand{\left}{\mathopen{}\mathclose\bgroup\originalleft}
\renewcommand{\right}{\aftergroup\egroup\originalright}

\newcounter{thm} 
\newtheorem{theorem}[thm]{\indent Theorem}

\newtheorem{assumption}{\indent Assumption}

\newtheorem{proposition}{\indent Proposition}

\newtheorem{lemma}{\indent Lemma}

\newtheorem{corollary}{\indent Corollary}

\newtheorem{definition}{\indent Definition}

\newtheorem{remark}{\indent Remark}

\newtheorem{example}{\indent Example}

\newtheorem{Simulation}{Simulation}

\newtheorem{fact}{\indent Fact}

\newtheorem{conjecture}{\indent Conjecture}

\newtheorem{experiment}{\indent Experiment}

\allowdisplaybreaks

\renewcommand{\theenumi}{{\it (\alph{enumi})}}

\usepackage{cite}
\usepackage{accents}

\newlength\figureheight 
\newlength\figurewidth

\allowdisplaybreaks

\graphicspath{ {Figures/} }
\DeclareGraphicsExtensions{.png}

\DeclareMathAlphabet{\mathcal}{OMS}{cmsy}{m}{n} 

\crefname{equation}{}{}

\usepackage[ruled, linesnumbered, lined]{algorithm2e}
\SetKwInput{KwInit}{Initialize}
\SetKwFunction{SafeSet}{SafeSet}
\SetKwFunction{GetSafe}{GetSafe}
\SetKwFunction{GetUnsafe}{GetUnsafe}

\SetCommentSty{mycommfont}

\begin{document}
	\title{Receding-Horizon Nonlinear Optimal Control With Safety Constraints Using Constrained Approximate Dynamic Programming}	
	\author{Ricardo Gutierrez and Jesse B. Hoagg  
		\thanks{R. Gutierrez and J. B. Hoagg are with the Department of Mechanical and Aerospace Engineering, University of Kentucky, Lexington, KY, USA. (e-mail: Ricardo.Gutierrez@uky.edu, jesse.hoagg@uky.edu).}
		\thanks{R. Gutierrez is supported by the Fulbright-SENACYT Scholarship. This work is also supported in part by the National Science Foundation (1849213) and Air Force Office of Scientific Research (FA9550-20-1-0028).}
	}    
	\maketitle

\begin{abstract}
We present a receding-horizon optimal control for nonlinear continuous-time systems subject to state constraints.
The cost is a quadratic finite-horizon integral.
The key enabling technique is a new constrained approximate dynamic programming (C-ADP) approach for finite-horizon nonlinear optimal control with constraints that are affine in the control.
The C-ADP approach is intuitive because it uses a quadratic approximation of the cost-to-go function at each backward step.
This method yields a sequence of analytic closed-form optimal control functions, which have identical structure and where parameters are obtained from 2 Riccati-like difference equations. 
This C-ADP method is well suited for real-time implementation.
Thus, we use the C-ADP approach in combination with control barrier functions to obtain a continuous-time receding-horizon optimal control that is farsighted in the sense that it optimizes the integral cost subject to state constraints along the entire prediction horizon. 
Lastly, receding-horizon C-ADP control is demonstrated in simulation of a nonholonomic ground robot subject to velocity and no-collision constraints. 
We compare performance with 3 other approaches. 

%
%
\end{abstract}



\section{Introduction}

Autonomous systems are often required to achieve performance goals (e.g., way-point navigation, formation control) while satisfying state constraints (e.g., safety). 
However, performance goals and constraints may compete with one another.
In this case, conflicts can be resolved using barrier functions~\cite{prajna2007framework,wieland2007constructive,ames2016control} or 
model predictive control (MPC) \cite{borrelli2017predictive,bemporad2002model,tondel2003algorithm}.

Control barrier functions (CBFs) address state constraints by generating controls that make a specified safe set forward invariant \cite{wieland2007constructive}. 
CBFs are often integrated in real-time optimization as constraints that ensure forward invariance of the safe set while minimizing an instantaneous cost \cite{ames2016control}. 
One approach is instantaneous minimum intervention, where a CBF safety filter generates a control that is as close as possible to a desired performance control while ensuring state constraint satisfaction. 
CBF methods have been demonstrated on a variety of applications, including  
mobile ground robots \cite{ames2019control,cosner2021measurement, safari2025time,safari2025safe,rabiee2024composition,rabiee2024closed,rabiee2023automatica}, unmanned air vehicles \cite{wang2023multi, zheng2023constrained}, spacecraft \cite{breeden2023robust,kamat2026electromagnetic}, and cruise control \cite{ames2016control,xiao2022event}. 
CBF approaches can be  computationally efficient (e.g., implemented as quadratic programs); however, these methods tend to be shortsighted and greedy, prioritizing 
instantaneous safety and performance~\cite{rabiee2025guaranteed}.

On the other hand, MPC optimizes performance over a forward-looking time horizon. 
In the most general setting, MPC methods can address nonlinear dynamics, nonlinear objective functions, and constraints.
However, nonlinear MPC must be initialized carefully to ensure recursive feasibility of the constraints, and it is computationally expensive.
Thus, methods have been developed to address the computational burden. 
For example, the iterative linear quadratic regulator (iLQR) uses Taylor-series expansion to approximate the nonlinear optimal control problem to a linear-quadratic problem that can be solved with finite-horizon time-varying LQR \cite{li2004iterative}. 
Then, iterative forward-and-backward passes are used to refine the solution; these passes may include line search \cite{iLQR_Bharath2}.
However, convergence may be slow \cite{sleiman2021constraint}.
Constrained iLQR (CiLQR) is an extension to address constraints by including penalization in the cost function and using augmented Lagrangians \cite{aoyama2021constrained,sleiman2021constraint}. 
Although CiLQR encourages constraint satisfaction, constraints can be violated.

Dynamic programming is a useful tool for problems that involve sequential decisions such as the multistage optimizations that arise in finite-horizon optimal control \cite{rao2019engineering,bertsekas2011dynamic}. 
Dynamic programming involves solving the Bellman equation, which is generally difficult with nonquadratic cost, nonlinear dynamics, or constraints \cite{ha2022novel}. 
Thus, approximate dynamic programming (ADP) uses approximations to simplify the recursive solution of the Bellman equation \cite{bertsekas2011dynamic}. 
For example, \cite{ha2022novel,xiao2015online,wang2010adaptive,guo2017policy} use neural networks to approximate the cost-to-go function. 
However, this can be computationally expensive and does not directly address constraints. 
Approaches that address input constraints include \cite{yuan2020solver,yang2019approximate,mare2007solution}. 
For example, \cite{mare2007solution} uses dynamic programming to obtain an analytical solution for an input-constrained LQR problem, where the dynamics are linear and the scalar input is upper and lower bounded. 
Notably, \cite{yuan2020solver,yang2019approximate,mare2007solution} do not address state constraints.

This article presents 2 primary contributions. 
First, Section~\ref{sect:ADP-GENERAL} addresses discrete-time finite-horizon nonlinear optimal control with a quadratic cost, nonlinear dynamics, and nonlinear constraints, which are affine in the control at each step of the horizon. 
We present an analytic closed-form solution using constrained approximate dynamic programming (C-ADP), where the cost-to-go function is quadratically approximated at each step of the backward recursion.
This approach yields a sequence of analytic nonlinear control functions that have identical structure, where the parameters of each function are obtained through the recursive solution of 2 Riccati-like difference equations. 
This analytic solution is well suited for real-time implementation, and the sequence of analytic optimal functions can be updated efficiently.
These features are important for the article's other contribution.

The second contribution is a receding-horizon nonlinear optimal control for continuous-time systems subject to state constraints. 
We use the C-ADP solution to construct a feedback control that approximately minimizes a finite-horizon integral cost subject to state constraints that are enforced along the horizon using affine constraints generated from a CBF. 
Notably, the sequence of C-ADP optimal control functions ensures that state constraints are satisfied not only instantly but for all time in the horizon. 
Note that although the C-ADP approach is discrete, the optimal control functions can be implemented continuously to ensure constraint satisfaction for the continuous-time dynamics. 
Finally, Section~\ref{sec:robot} presents numerical simulations of a nonholonomic ground robot navigating a cluttered environment. 
We compare the performance of receding-horizon C-ADP control with 2 variants of CiLQR and a greedy minimum-intervention control.

\section{Constrained Approximate Dynamic Programming (C-ADP)} \label{sect:ADP-GENERAL}

Consider the quadratic cost $J:\mathbb{R}^{n} \times \cdots \times \mathbb{R}^{n} \times \mathbb{R}^{m} \times \cdots \times \mathbb{R}^{m} \to \mathbb{R}$ defined by
\begin{equation}
    J(x_1,\ldots, x_N,u_0,\ldots,u_{N-1} )\triangleq l_{N}(x_{N})+\sum_{i=0}^{N-1} l_{i}(x_{i},u_{i}),
    \label{ADP:Total_cost}
\end{equation}
where stage cost $l_{i}:\mathbb{R}^{n} \times \mathbb{R}^{m} \to \mathbb{R}$ and terminal cost $l_{N}:\mathbb{R}^{n}\to \mathbb{R}$ are 
\begin{gather}
    l_{i}(x_i,u_i) \triangleq \frac{1}{2}u_i^{\rm{T}}R_{i}u_i+\Omega_{i}^{\rm{T}}u_i+ \frac{1}{2}x_i^{\rm{T}}Q_{i}x_i+\Gamma_{i}^{\rm{T}}x_i,
    \label{ADP:li}\\
    l_{N}(x_{N}) \triangleq \frac{1}{2}x_{N}^{\rm{T}}Q_{N}x_{N}+\Gamma_{N}^{\rm{T}}x_{N},
    \label{ADP:lN}
\end{gather}
where $x_{0}\in\mathbb{R}^{n}$ is given; $R_0,\ldots,R_{N-1} \in \mathbb{R}^{m \times m}$ are positive definite; $Q_{0},\ldots,Q_N\in \mathbb{R}^{n \times n}$ are positive semidefinite; $\Omega_0,\ldots,\Omega_{N-1} \in \mathbb{R}^{m}$; and  $\Gamma_{0},\ldots,\Gamma_{N} \in \mathbb{R}^{n}$.

The objective of this section is to solve the following constrained nonlinear optimization problem 
\begin{subequations}\label{ADP:OCP:Main_prob}
        \begin{align}
            \min_{\substack{u_{0},\ldots,u_{N-1} \\
                  x_{1},\ldots,x_{N}}}  &J(x_1,\ldots, x_N,u_0,\ldots,u_{N-1} )              \label{ADP:OCP:Main_prob:cost}\\
            \text{subject to} \qquad &x_{i+1} = F(x_i)+G(x_i)u_{i}, \quad i\in\SN,  \label{ADP:OCP:Main_prob:eq_const}\\
             &a_{i}(x_{i})+b_{i}(x_{i})^{\rm{T}}u_{i}   \ge 0, \quad i\in\SN,                   \label{ADP:OCP:Main_prob:ineq_const}
        \end{align}
    \end{subequations} 
where $F:\mathbb{R}^{n} \to \mathbb{R}^{n}$, $G:\mathbb{R}^{n} \to \mathbb{R}^{n \times m}$, $a_1,\ldots,a_{N-1}:\mathbb{R}^{n} \to \mathbb{R}$, and $b_1,\ldots,b_{N-1}:\mathbb{R}^{n} \to \mathbb{R}^{m}$ are continuously differentiable; and $\SN \triangleq \{ 0, \ldots, N-1\}$.

We restrict our attention to quadratic cost $J$; however, non-quadratic costs can be considered using 2nd-order Taylor-series approximation if appropriate. 
We make the following assumption to ensure feasibility.  

\begin{assumption}\rm
    For all $i\in \SN$ and all $x \in \mathbb{R}^{n}$, if $b_{i}(x)=0$, then $a_i(x)>0$.
         \label{assum:b(x)_neq_0}
\end{assumption}

The method in this article uses approximate dynamic programming to obtain a closed-form solution (i.e., $u_i$ as a function of $x_i$) that satisfies constraints \cref{ADP:OCP:Main_prob:eq_const,ADP:OCP:Main_prob:ineq_const} but may be suboptimal because a 2nd-order expansion of the cost-to-go function along a nominal trajectory $\{\bar{x}_{i}\}_{i\in\SN}$ is used in each step of the dynamic programming recursion. 
The next subsection demonstrates the solution approach for an $N=3$ horizon. 
Then, we present the $N$-step horizon closed-form C-APD solution and analyze its properties.

\subsection{Solution Derivation for $N=3$}
\label{sec:derivation}
    
Consider the optimization problem \eqref{ADP:OCP:Main_prob}, where $N=3$. 
The optimizer for step $i=2$ is $u_{2}^{*}:\mathbb{R}^{n} \to \mathbb{R}^{m}$ given by 
\begin{subequations}\label{ADP:OCP:Main_prob3}
    \begin{align}
    u_{2}^{*}(x_{2})  & \triangleq \arg\min_{u_{2}} l_{2}\big(x_{2},u_{2}\big)+l_{3}(x_{3})     \label{ADP:OCP:approx:cost3}\\
    \text{subject to}\quad   & x_{3} = F(x_{2})+G(x_{2})u_{2},   \label{ADP:OCP:approx:eq_const3}\\
     & a_{2}(x_{2})+b_{2}(x_{2})^{\rm{T}}u_{2}   \ge 0.    \label{ADP:OCP:approx:ineq_const3}
    \end{align}
\end{subequations} 
Since \eqref{ADP:OCP:Main_prob3} is a QP, the first-order necessary conditions yield the minimizer
\begin{equation}
u_{2}^{*}(x_{2}) =  k_{2}(x_{2})+\lambda_{2}(x_{2})W_{2}(x_{2}) b_{2}(x_{2}), \label{ADP:OCP:approx:u_2}
\end{equation}        
where 
\begin{align}
\lambda_{2}(x_{2}) &\triangleq \max \left \{ 0, \frac{-a_{2}(x_{2})-b_{2}(x_{2})^{\rm{T}}k_{2}(x_{2})}{b_{2}(x_{2})^{\rm{T}}W_{2}(x_{2})b_{2}(x_{2})} \right \},\label{ADP:OCP:approx:lamb_2}\\
W_{2}(x_{2}) &\triangleq (R_{2}+G(x_{2})^{\rm{T}}P_{3}G(x_{2}))^{-1}, \label{ADP:OCP:approx:W_2}\\
k_{2}(x_{2}) &\triangleq -W_2(x_2) \left [ G(x_{2})^{\rm{T}} (P_{3}F(x_{2}) +T_{3})+\Omega_{2} \right ], \label{ADP:OCP:approx:c_2}
\end{align}
where $P_{3}=Q_{3}$ and $T_{3}=\Gamma_{3}$.
The optimal cost is 
\begin{equation} 
l_{2}^{*}(x_{2}) \triangleq l_{2}\big(x_{2},u_{2}^{*}(x_{2})\big)+l_{3} ( F_2^*(x_2) ), \label{ADP:OCP:approx:cost2go_2}
\end{equation}
where $F_2^*(x_2) \triangleq F(x_{2}) + G(x_2)u_{2}^{*}(x_{2})$.

The closed-form optimizer \eqref{ADP:OCP:approx:u_2} for this first step is the unique global minimizer of \eqref{ADP:OCP:Main_prob3}.
The next step in dynamic programming is to find the optimal $u_1$ by solving 
\begin{subequations}\label{eq:opt.u_1}
        \begin{align}
            \min_{u_{1}} \, \, &l_{1}\big(x_{1},u_{1}\big)+ l_{2}^*(x_{2}) \label{eq:C2G.u_1}\\
            \textrm{subject to} \qquad &  x_{2}=F(x_{1})+G(x_{1})u_{1},  \\
            & a_{1}(x_{1})+b_{1}(x_{1})^{\rm{T}}u_{1} \geq 0.  
        \end{align}
\end{subequations}
However, \eqref{eq:opt.u_1} cannot generally be solved in closed form. 
Thus, we solve a surrogate of this problem by approximating the cost-to-go $l_2^*$ in \eqref{eq:C2G.u_1}. 
First, \eqref{ADP:OCP:approx:lamb_2} is approximated by replacing the maximum with the softplus. 
Specifically, let $\eta>0$, and consider $\mathrm{softplus}_{\eta}:\mathbb{R} \to [0,\infty)$ defined by
\begin{equation}
    \mathrm{softplus}_{\eta}(z)= \frac{1}{\eta}\log\Big(1+\mathrm{e}^{\eta z}\Big),
\end{equation}
which is a smooth upper-bound approximation of $\max \{ 0,z\}$.
Thus, $\lambda_2$ is approximated as 
\begin{equation}
\tilde{\lambda}_{2}(x_{2}) \triangleq \mbox{softplus}_\eta \left ( \frac{-a_{2}(x_{2})-b_{2}(x_{2})^{\rm{T}}k_{2}(x_{2})}{b_{2}(x_{2})^{\rm{T}}W_{2}(x_{2})b_{2}(x_{2})} \right ),\label{ADP:OCP:approx:tLambda_2}
\end{equation}
and the associated approximations of $u_2^*$ and $F_2^*$ are
\begin{align}
        \tilde{u}_{2}^{*}(x_{2})  &\triangleq k_{2}(x_{2})+\tilde{\lambda}_{2}(x_{2})W_2(x_2)b_{2}(x_{2}) ,
        \label{ADP:OCP:approx:tu_2}\\
         \tilde{F}_{2}^{*}(x_{2}) &\triangleq F(x_{2})+G(x_{2})\tilde{u}_{2}^{*}(x_{2}). 
        \label{ADP:OCP:approx:tx_2}
\end{align}  
Then, the cost-to-go \eqref{ADP:OCP:approx:cost2go_2} can be approximated by
\begin{equation}\label{tilde_l2}
\tilde{l}_{2}^*(x_{2})\triangleq l_{2}\big(x_{2},\tilde{\tilde{u}}_{2}^*(x_{2})\big)+l_{3}(\tilde{\tilde{F}}_{2}^*(x_{2})),
\end{equation}    
where
\begin{align}
         \tilde{\tilde{u}}_{2}^*(x_{2}) &\triangleq K_{2}(x_{2}-\bar x_{2})+ u_{2}^{*}(\bar{x}_{2}), \label{eq:tilde_tilde_u2}\\
         \tilde{\tilde{F}}_{2}^*(x_{2}) &\triangleq \tilde{A}_{2}(x_{2}-\bar x_{2})+ F_{2}^{*}(\bar{x}_{2})\label{eq:tilde_tilde_F2}
\end{align} 
are the Taylor-series approximation of $u_2^*$ and $F_2^*$, and 
\begin{equation}\label{eq:K2}
K_{2} \triangleq \frac{\partial \tilde{u}_{2}^{*}(x)}{\partial x} \Bigg|_{\bar x_{2}},\qquad
\tilde{A}_{2} \triangleq \frac{\partial \tilde F_2^*(x)}{\partial x} \Bigg|_{\bar x_{2}}.
\end{equation}
Then, using \cref{ADP:li,ADP:lN,eq:tilde_tilde_u2,eq:tilde_tilde_F2,eq:K2} in \eqref{tilde_l2} yields
\begin{equation}
\tilde{l}_{2}^*(x_{2}) = \frac{1}{2}x_{2}^{\rm{T}}P_{2}x_{2}+T_{2}^{\rm{T}}x_{2}, \label{ADP:OCP:approx:approx_cost2go_2}
\end{equation}
where     
\begin{align}
P_{2} &\triangleq Q_{2}+K_{2}^{\rm{T}}R_{2}K_{2}+\tilde{A}_{2}^{\rm{T}}P_{3}\tilde{A}_{2}, \label{ADP:OCP:approx:P_2}\\
T_{2} &\triangleq K_{2}^{\rm{T}}\Big(R_{2} u_{2}^{*}(\bar{x}_{2})-R_{2}K_{2}\bar{x}_{2}+\Omega_{2}\Big)+\Gamma_{2}\nn\\
& \qquad +\tilde{A}_{2}^{\rm{T}}\Big(P_{3} F_{2}^{*}(\bar{x}_{2})-P_{3}\tilde{A}_{2}\bar{x}_{2}+T_{3}\Big).    \label{ADP:OCP:approx:T_2}
\end{align}

Then, we solve \eqref{eq:opt.u_1} with $l_2^*$ is replaced by $\tilde l_2^*$. 
Specifically, the optimizer for step $i=1$ is $u_1^*:\mathbb{R}^{n} \to \mathbb{R}^{m}$ given by
    \begin{subequations}         \label{ADP:OCP:approx:QP_form_1}
        \begin{align}
            u_{1}^{*}(x_{1}) & \triangleq \arg\min_{u_{1}} l_{1}\big(x_{1},u_{1}\big)+ \tilde{l}_{2}^*(x_{2}) \label{ADP:OCP:approx:cost1}\\
            \textrm{subject to} \quad &  x_{2}=F(x_{1})+G(x_{1})u_{1},  \label{ADP:OCP:approx:eq_const1}\\
            & a_{1}(x_{1})+b_{1}(x_{1})^{\rm{T}}u_{1} \geq 0.  \label{ADP:OCP:approx:ineq_const1}
        \end{align}  
    \end{subequations}
Note that \eqref{ADP:OCP:approx:QP_form_1} is a QP with structure that mirrors \eqref{ADP:OCP:Main_prob3} from the previous backward step. 
The first-order necessary conditions yield the minimizer
\begin{equation}
u_{1}^{*}(x_{1}) =  k_{1}(x_{1})+\lambda_{1}(x_{1})W_{1}(x_{1}) b_{1}(x_{1}), \label{ADP:OCP:approx:u_1}
\end{equation}        
where 
\begin{align}
\lambda_{1}(x_{1}) &\triangleq \max \left \{ 0, \frac{-a_{1}(x_{1})-b_{1}(x_{1})^{\rm{T}}k_{1}(x_{1})}{b_{1}(x_{1})^{\rm{T}}W_{1}(x_{1})b_{1}(x_{1})} \right \},\label{ADP:OCP:approx:lamb_1}\\
W_{1}(x_{1}) &\triangleq (R_{1}+G(x_{1})^{\rm{T}}P_{2}G(x_{1}))^{-1}, \label{ADP:OCP:approx:W_1}\\
k_{1}(x_{1}) &\triangleq -W_1(x_1) \left [ G(x_{1})^{\rm{T}} (P_{2}F(x_{1}) +T_{2})+\Omega_{1} \right ], \label{ADP:OCP:approx:c_1}
\end{align}
where $P_{3}$ and $T_{3}$ are given by \cref{ADP:OCP:approx:P_2,ADP:OCP:approx:T_2}.
Notice that the optimizer \cref{ADP:OCP:approx:u_1,ADP:OCP:approx:lamb_1,ADP:OCP:approx:W_1,ADP:OCP:approx:c_1} has the same form as the optimizer \cref{ADP:OCP:approx:u_2,ADP:OCP:approx:lamb_2,ADP:OCP:approx:W_2,ADP:OCP:approx:c_2} on the previous step with subscripts 2 and 3 replaced by 1 and 2, respectively. 
Similarly, the optimal cost is 
\begin{equation} 
l_{1}^{*}(x_{1}) \triangleq l_{1}\big(x_{1},u_{1}^{*}(x_{1})\big)+ \tilde l_{2}^* ( F_1^*(x_1) ), \label{ADP:OCP:approx:cost2go_1}
\end{equation}
where $F_1^*(x_1) \triangleq F(x_1) + G(x_1)u_1^{*}(x_1)$.

The next backward step is to find the optimal $u_0$ by solving the optimization \eqref{eq:opt.u_1} with subscripts 1 and 2 replaced by 0 and 1, respectively. 
To solve this optimization, we use the same steps as before. 
We replace the cost-to-go $l_1^*$ by the approximation 
\begin{equation}\label{eq:tilde_l1}
\tilde{l}_{1}^*(x_{1})\triangleq l_{1}\big(x_{1},\tilde{\tilde{u}}_{1}^*(x_{1})\big)+\tilde l_{2}^*(\tilde{\tilde{F}}_{1}^*(x_{1})),
\end{equation}   
where $\tilde{\tilde{u}}_{1}^*$ and $\tilde{\tilde{F}}_{1}^*(x_{1})$ are given by \cref{eq:tilde_tilde_u2,eq:tilde_tilde_F2,eq:K2} and \cref{ADP:OCP:approx:tLambda_2,ADP:OCP:approx:tu_2,ADP:OCP:approx:tx_2} with subscript 2 replaced by subscript 1. 
Then, substituting \cref{ADP:li,ADP:li,ADP:OCP:approx:approx_cost2go_2} into \eqref{eq:tilde_l1} yields
\begin{equation}
\tilde{l}_{1}^*(x_{1}) = \frac{1}{2}x_{1}^{\rm{T}}P_{1}x_{1}+T_{1}^{\rm{T}}x_{1}, \label{ADP:OCP:approx:approx_cost2go_1}
\end{equation}
where 
\begin{align}
P_{1} &\triangleq Q_{1}+K_{1}^{\rm{T}}R_{1}K_{1}+\tilde{A}_{1}^{\rm{T}}P_{2}\tilde{A}_{1}, \label{ADP:OCP:approx:P_1}\\
T_{1} &\triangleq K_{1}^{\rm{T}}\Big(R_{1} {u}_{1}^{*}(\bar{x}_{1})-R_{1}K_{1}\bar{x}_{1}+\Omega_{1}\Big)+\Gamma_{1}\nn\\
& \qquad +\tilde{A}_{1}^{\rm{T}}\Big(P_{2} {F}_{1}^{*}(\bar{x}_{1})-P_{2}\tilde{A}_{1}\bar{x}_{1}+T_{2}\Big). \label{ADP:OCP:approx:T_1}
\end{align}
The approximate cost-to-go \cref{ADP:OCP:approx:approx_cost2go_1,ADP:OCP:approx:P_1,ADP:OCP:approx:T_1} has the same form as the approximate cost-to-go \cref{ADP:OCP:approx:approx_cost2go_2,ADP:OCP:approx:P_2,ADP:OCP:approx:T_2} on the previous step with subscripts 2 and 3 replaced by 1 and 2, respectively.
The optimizer for step $i=0$ is $u_0^*:\mathbb{R}^{n} \to \mathbb{R}^{m}$ given by
    \begin{subequations} \label{ADP:OCP:approx:QP_form_0}
        \begin{align}
            u_{0}^{*}(x_{0}) & \triangleq \arg\min_{u_{0}} l_{0}\big(x_{0},u_{0}\big)+ \tilde{l}_{1}^*(x_{1})\\
            \textrm{subject to} \quad &  x_{1}=F(x_0)+G(x_0)u_0, \\
            & a_{0}(x_0)+b_{0}(x_0)^{\rm{T}}u_0 \geq 0.  
        \end{align}  
    \end{subequations}
The QP \eqref{ADP:OCP:approx:QP_form_0} has the same structure as
\cref{ADP:OCP:approx:QP_form_1,ADP:OCP:Main_prob3} from the previous steps, and the unique global minimizer has the same form as the previous steps. 
Specifically, the optimizer is given by \cref{ADP:OCP:approx:u_1,ADP:OCP:approx:lamb_1,ADP:OCP:approx:W_1,ADP:OCP:approx:c_1} with subscripts 1 and 2 replaced by 0 and 1, respectively.

\subsection{Solution for $N$-Step Horizon}

The derivation process from the previous section can be applied for an $N$-step horizon. 
This process yields the following solution.
For all $i \in \SN$, define 
\begin{equation}
u_{i}^{*}(x)\triangleq k_{i}(x) + \lambda_{i}(x)W_{i}(x) b_{i}(x), \label{Gral_alg:ctrl_optim}
\end{equation}
where
\begin{align}
        \lambda_{i}(x) &\triangleq \max \left \{ 0, \frac{-a_{i}(x)-b_{i}(x)^{\rm{T}}k_{i}(x)}{b_{i}(x)^{\rm{T}}W_{i}(x)b_{i}(x)} \right \},
        \label{Gral_alg:Lambda_optim}\\
        W_{i}(x) &\triangleq \left (R_{i}+G(x)^{\rm{T}}P_{i+1}G(x) \right )^{-1},
        \label{Gral_alg:W_optim}\\
        k_{i}(x) &\triangleq  -W_i(x) \left [ G(x)^{\rm{T}} \left ( P_{i+1}F(x)+T_{i+1} \right )+\Omega_{i} \right ],
        \label{Gral_alg:c_optim}
\end{align}
where
\begin{align}
P_{i}&\triangleq Q_{i}+K_{i}^{\rm{T}}R_{i}K_{i}+\tilde{A}_{i}^{\rm{T}}P_{i+1}\tilde{A}_{i},
        \label{Gral_alg:P_optim}\\
T_{i} &\triangleq \tilde{A}_{i}^{\rm{T}} \left ( P_{i+1} ( F(\bar x_i) + G(\bar x_i){u}_{i}^{*}(\bar x_i) - \tilde{A}_{i}\bar{x}_{i} ) +T_{i+1} \right )\nn\\
& \qquad + K_{i}^{\rm{T}}\left (R_{i} {u}_{i}^{*}(\bar{x}_{i})-R_{i}K_{i}\bar{x}_{i}+\Omega_{i} \right )+\Gamma_{i}
        \label{Gral_alg:T_optim}
\end{align}
where $P_{N}=Q_{N}$, $T_{N}=\Gamma_{N}$, and 
\begin{gather}
        \tilde{A}_{i} \triangleq \frac{\partial \tilde F_i^*(x)}{\partial x} \Bigg|_{\bar x_{i}},  \qquad
        K_{i} \triangleq \frac{\partial \tilde{u}_{i}^{*}(x)}{\partial x} \Bigg|_{\bar x_{i}},
        \label{Gral_alg:AK_optim}\\
        \tilde{\lambda}_{i}(x) \triangleq \mbox{softplus}_\eta \left ( \frac{-a_{i}(x)-b_{i}(x)^{\rm{T}}k_{i}(x)}{b_{i}(x)^{\rm{T}}W_{i}(x)b_{i}(x)} \right ),
        \label{Gral_alg:lambda_tild_optim}\\
\tilde u_{i}^{*}(x)\triangleq k_{i}(x) +\tilde \lambda_{i}(x)W_{i}(x) b_{i}(x) 
        \label{Gral_alg:u_tild_optim},\\
    \tilde{F}_{i}^*(x)\triangleq F(x)+G(x)\tilde{u}_{i}^*(x).
        \label{Gral_alg:Fxu_tild_optim}
\end{gather}    
    
The sequence of optimal functions $u_{N-1}^*,\ldots,u_0^*$ are solved using backward recursion. Specifically, \cref{Gral_alg:ctrl_optim,Gral_alg:Lambda_optim,Gral_alg:W_optim,Gral_alg:c_optim,Gral_alg:P_optim,Gral_alg:T_optim} are solved from $i=N-1$ to $i=0$, where the the recursion is initialized with the terminal conditions $P_N=Q_N$ and $T_N=\Gamma_N$. 
Since $Q_i$ is positive semidefinite and $R_i$ is positive definite, it follows from \cref{Gral_alg:W_optim,Gral_alg:P_optim} that $P_{i}$ is positive semidefinite and for all $x \in \BBR^n$, $W_{i}(x)$ exists and is positive definite.

\begin{remark}\rm
The C-ADP solution 
\cref{Gral_alg:ctrl_optim,Gral_alg:Lambda_optim,Gral_alg:W_optim,Gral_alg:c_optim,Gral_alg:P_optim,Gral_alg:T_optim,Gral_alg:AK_optim,Gral_alg:lambda_tild_optim,Gral_alg:u_tild_optim,Gral_alg:Fxu_tild_optim} simplifies to the classic finite-horizon linear quadratic regulator (LQR) for discrete time. 
To illustrate this case, for all $i\in\SN$, let $R_i =R \in \BBR^{m \times m}$, $Q_i = Q \in \BBR^{n \times n}$, $\Omega_i=0$, and $\Gamma_i=0$, which implies that \eqref{ADP:Total_cost} is the finite-horizon LQR cost.
Next, let $F(x) = Ax$ and $G(x)=B$, where $A\in\BBR^{n \times n}$ and $B\in\BBR^{n\times m}$, which implies that \eqref{ADP:OCP:Main_prob:eq_const} is linear (i.e., discrete-time linear dynamics).
Finally, let $a_i(x) > 0$ and $b_i=0$, which implies that there are no inequality constraints (i.e., \eqref{ADP:OCP:Main_prob:ineq_const} are trivially satisfied). 
In this case, it follows from \cref{Gral_alg:ctrl_optim,Gral_alg:Lambda_optim,Gral_alg:W_optim,Gral_alg:c_optim,Gral_alg:P_optim,Gral_alg:T_optim,Gral_alg:AK_optim,Gral_alg:lambda_tild_optim,Gral_alg:u_tild_optim,Gral_alg:Fxu_tild_optim} that for $i\in \SN$, $T_i=0$, $\lambda_i=0$, and $W_i = (R+B^\rmT P_{i+1} B)^{-1}$.
Furthermore, for $i\in \SN$, the optimizer is 
\begin{equation*}
u_i^*(x) = K_i x, \quad K_i = -(R+B^\rmT P_{i+1} B)^{-1} B^\rmT P_{i+1} A,
\end{equation*}
where \eqref{Gral_alg:P_optim} simplifies to the discrete Riccati equation
\begin{equation*}
P_i = Q + A^\rmT P_{i+1} A - A^\rmT P_{i+1} B (R + B^\rmT P_{i+1} B)^{-1} B^\rmT P_{i+1} A,
\end{equation*}
where $P_N=Q_N$. 
In this case, $K_i$ is the LQR gain.
\end{remark}

The following result shows that  $u_i^*$ satisfies \cref{ADP:OCP:Main_prob:ineq_const} for all $x$.

\begin{theorem} \label{th:ADP:const>=0}\rm
Assume Assumption~\ref{assum:b(x)_neq_0} is satisfied.
Then, for all $i\in \SN$ and all $x\in\BBR^n$, $a_i(x)+b_i(x)^{\rm{T}}u_{i}^{*}(x) \geq 0$.
\end{theorem}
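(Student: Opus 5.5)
Fix $i\in\SN$ and $x\in\BBR^n$. I would compute $a_i(x)+b_i(x)^{\rm T}u_i^*(x)$ directly from \eqref{Gral_alg:ctrl_optim}, splitting on whether $b_i(x)$ vanishes. Before doing so I would record a fact noted after the recursion: $P_{i+1}$ is positive semidefinite and $R_i$ is positive definite, so $W_i(x)$ exists and is positive definite. Hence $k_i(x)$ is well defined, and whenever $b_i(x)\neq0$ we have $\beta\triangleq b_i(x)^{\rm T}W_i(x)b_i(x)>0$.

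\emph{Case 1: $b_i(x)=0$.} The constraint expression is then just $a_i(x)$, whatever value $u_i^*(x)$ takes. By Assumption~\ref{assum:b(x)_neq_0}, $a_i(x)>0$, so the claim holds. The only subtlety is that the quotient inside \eqref{Gral_alg:Lambda_optim} has a zero denominator here. I would handle this with a convention: $\lambda_i(x)$ multiplies $W_i(x)b_i(x)=0$, so $u_i^*(x)=k_i(x)$ under any finite convention for $\lambda_i$. I expect this bookkeeping to be the only delicate point.

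\emph{Case 2: $b_i(x)\neq 0$.} Let $\sigma\triangleq -a_i(x)-b_i(x)^{\rm T}k_i(x)$, so that $\lambda_i(x)=\max\{0,\sigma/\beta\}$. Then
\begin{equation*}
a_i(x)+b_i(x)^{\rm T}u_i^*(x) = -\sigma + \lambda_i(x)\beta .
\end{equation*}
If $\sigma\le 0$, then $\lambda_i(x)=0$, and the expression equals $-\sigma\ge0$. If $\sigma>0$, then $\lambda_i(x)=\sigma/\beta$, and the expression equals $-\sigma+\sigma=0\ge0$. Both subcases give nonnegativity.

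Combining the two cases gives the theorem for every $i$ and $x$. The argument never uses the specific values of $P_{i+1}$, $T_{i+1}$, or the softplus-based linearization. It depends only on the structure of \eqref{Gral_alg:ctrl_optim}–\eqref{Gral_alg:Lambda_optim} and on the positive definiteness of $W_i(x)$. Consequently it holds regardless of the choice of nominal trajectory $\{\bar x_i\}$ and of $\eta$.
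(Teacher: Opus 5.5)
Your proof is correct and is essentially the paper's argument. The paper uses positive definiteness of $W_i(x)$ to compute $a_i(x)+b_i(x)^{\rm T}u_i^*(x)=a_i(x)+b_i(x)^{\rm T}k_i(x)+\max\{0,-a_i(x)-b_i(x)^{\rm T}k_i(x)\}=\max\{0,a_i(x)+b_i(x)^{\rm T}k_i(x)\}$, which is your Case~2 with both subcases packaged into one max identity. Your separate Case~1 is a small gain in rigor: the paper's identity needs $b_i(x)^{\rm T}W_i(x)b_i(x)>0$, and it never explicitly invokes Assumption~\ref{assum:b(x)_neq_0} for the case $b_i(x)=0$, where the quotient defining $\lambda_i$ is undefined.
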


\begin{proof}[\indent Proof]
Since $W_i(x)$ is positive definite, it follows from \cref{Gral_alg:ctrl_optim,Gral_alg:Lambda_optim} that
\begin{align*}
a_{i}(x)+b_{i}(x)^{\rm{T}}u_{i}^{*}(x)&= a_{i}(x)+b_{i}(x)^{\rm{T}}k_{i}(x)\\
        &\quad +\max \, \{0,-a_{i}(x)-b_{i}(x)^{\rm{T}}k_{i}(x) \},  \\  
        &= \max \, \{0,a_{i}(x)+b_{i}(x)^{\rm{T}}k_{i}(x)\},
    \end{align*}
which is nonnegative.
\end{proof}

The optimal functions $u_{i}^*$ have identical structure for each step $i$, and the information required to construct $u_{i}^*$ is propagated through the backward recursion of the matrices $P_i$ and $T_i$. 
Thus, the C-ADP solution 
\cref{Gral_alg:ctrl_optim,Gral_alg:Lambda_optim,Gral_alg:W_optim,Gral_alg:c_optim,Gral_alg:P_optim,Gral_alg:T_optim,Gral_alg:AK_optim,Gral_alg:lambda_tild_optim,Gral_alg:u_tild_optim,Gral_alg:Fxu_tild_optim} is well suited for a real-time receding-horizon implementation. 
Unlike iLQR or CilQR, the solution \cref{Gral_alg:ctrl_optim,Gral_alg:Lambda_optim,Gral_alg:W_optim,Gral_alg:c_optim,Gral_alg:P_optim,Gral_alg:T_optim,Gral_alg:AK_optim,Gral_alg:lambda_tild_optim,Gral_alg:u_tild_optim,Gral_alg:Fxu_tild_optim} satisfies the constraint \cref{ADP:OCP:Main_prob:ineq_const}, and the solution is in closed form, which facilitates fast-and-efficient feedback implemented as presented in Section~\ref{sect:control+CBF}.

\subsection{Optimality of the Solution}

Although \cref{Gral_alg:ctrl_optim,Gral_alg:Lambda_optim,Gral_alg:W_optim,Gral_alg:c_optim,Gral_alg:P_optim,Gral_alg:T_optim,Gral_alg:AK_optim,Gral_alg:lambda_tild_optim,Gral_alg:u_tild_optim,Gral_alg:Fxu_tild_optim} is not necessarily optimal with respect to the original cost \cref{ADP:OCP:Main_prob:cost}, $u_i^*$ is optimal with respect to the approximate stage cost obtained by replacing the cost-to-go with the approximate cost-to-go. 
%
%
%
%
To more clearly explain, we define the approximate cost-to-go for the $i-1$ step of recursion. 
Specifically, for all $i\in \SN$, define 
\begin{align}
\mspace{-12mu} \tilde{l}_{i}^*&(x_i) \triangleq l_{i}\Big(x_i,K_i(x_i-\bar x_i)+ u_i^{*}(\bar{x}_i)\Big)\nn\\
&\quad+\tilde l_{i+1}^*\Big(\tilde{A}_i(x_i-\bar x_i)+ F(\bar x_i) + G(\bar x_i)u_i^{*}(\bar x_i) \Big),\label{eq:approx_ctg}
\end{align}   
where $\tilde l^*_{N}(x_N) = l_N(x_N)$. 
The approximate cost-to-go \eqref{eq:approx_ctg} is the $N$-step extension of \cref{tilde_l2,eq:tilde_l1} used in the $N=3$ derivation.
Similar to Section~\ref{sec:derivation}, using \cref{ADP:li} in \eqref{eq:approx_ctg} yields 
\begin{equation}
\tilde{l}_{i}^*(x_i) = \frac{1}{2} x_i^\rmT P_i x_i + T_i^\rmT x_i.\label{eq:approx_ctg.2}
\end{equation}   
Then, for all $i\in \SN$, the approximate cost at step $i$ is 
\begin{equation}
\tilde{J}_{i}(x_i,u_i) \triangleq l_i(x_i,u_i) + \tilde{l}_{i+1}^*\Big (F(x_i)+G(x_i)u_i \Big ), \label{eq:approx_stage}
\end{equation}
and using \cref{ADP:li,eq:approx_ctg.2} in \eqref{eq:approx_stage} yields
\begin{align}
\tilde{J}_{i}(x_i,u_i) 
&= \frac{1}{2} u_i^\rmT \Big[ R_i + G(x_i)^\rmT P_{i+1} G(x_i) \Big ] u_i\nn\\
&\qquad+ \Big [ G(x_i)^\rmT \left ( T_{i+1} + P_{i+1} F(x_i) \right ) + \Omega_{i} \Big]^{\rm{T}} u_i \nn\\
&\qquad + \frac{1}{2} x_i^{\rm{T}}Q_{i} x_i+ \frac{1}{2} F(x_i)^\rmT P_{i+1} F(x_i)\nn\\
&\qquad+\Gamma_{i}^{\rm{T}} x_i + T_{i+1}^\rmT F(x_i), \label{eq:approx_stage.2}
\end{align}
which is quadratic and strictly convex in $u_i$. 
The next result shows that for all $x \in \BBR^n$, $u = u^*_i(x)$ is the unique global minimizer of $\tilde{J}_{i}(x,u)$ subject to the constraint $a_i(x) +b_i(x)^\rmT u \ge 0$.
The proof is similar to that of \cite[Theorem 1]{safari2025time} and is omitted due to space constraints. 

\begin{theorem}\label{thm:optimal}
\rm
Let $i\in\SN$ and $x \in \BBR^n$, and assume Assumption \ref{assum:b(x)_neq_0} is satisfied.
Let $u\in\BBR^m$ be such that $u \ne u_i^*(x)$ and $a_i(x) +b_i(x)^\rmT u \ge 0$. 
Then, $\tilde{J}_{i}(x,u) > \tilde{J}_{i}(x,u_i^*(x))$.
\end{theorem}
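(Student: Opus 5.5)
Fix $i\in\SN$ and $x\in\BBR^n$, and abbreviate $W\triangleq W_i(x)$, $k\triangleq k_i(x)$, $a\triangleq a_i(x)$, $b\triangleq b_i(x)$, $\lambda\triangleq\lambda_i(x)$, and $u^*\triangleq u_i^*(x)=k+\lambda Wb$. By \eqref{eq:approx_stage.2}, $\tilde J_i(x,\cdot)$ is a strictly convex quadratic in $u$ with Hessian $W^{-1}$ (positive definite, as noted after the recursion) and gradient $W^{-1}u - W^{-1}k$, since by \eqref{Gral_alg:c_optim} the linear coefficient equals $-W^{-1}k$. Completing the square therefore gives
\begin{equation*}
\tilde J_i(x,u) = \tfrac12 (u-k)^\rmT W^{-1}(u-k) + c(x),
\end{equation*}
where $c(x)$ does not depend on $u$. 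The problem thus reduces to projecting $k$ onto the half-space $\{u: a+b^\rmT u\ge 0\}$ in the $W^{-1}$-weighted norm. I will prove strict optimality directly from this identity rather than by invoking KKT theory.

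First I handle the degenerate case $b=0$. Assumption~\ref{assum:b(x)_neq_0} then gives $a>0$, so the constraint holds for every $u$. In this case I would interpret $\lambda Wb=0$, so $u^*=k$, which is the unique unconstrained minimizer; hence $\tilde J_i(x,u)-\tilde J_i(x,k)=\tfrac12(u-k)^\rmT W^{-1}(u-k)>0$ for $u\ne k$. The ratio in \eqref{Gral_alg:Lambda_optim} is undefined here, so I will note this convention explicitly. If $b\ne 0$, then $b^\rmT W b>0$, so $\lambda$ is well defined.

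For $b\ne0$, let $u$ be feasible with $u\ne u^*$ and set $d\triangleq u-u^*$. Expanding the identity above gives
\begin{equation*}
\tilde J_i(x,u)-\tilde J_i(x,u^*) = (u^*-k)^\rmT W^{-1}d + \tfrac12 d^\rmT W^{-1} d = \lambda b^\rmT d + \tfrac12 d^\rmT W^{-1}d .
\end{equation*}
The quadratic term is strictly positive because $d\ne0$, so it suffices to show $\lambda b^\rmT d\ge 0$. If $\lambda=0$ this is immediate. If $\lambda>0$, then $\lambda=(-a-b^\rmT k)/(b^\rmT Wb)$, and the constraint is active at $u^*$: $a+b^\rmT u^*=a+b^\rmT k+\lambda b^\rmT W b=0$, which is the computation in the proof of Theorem~\ref{th:ADP:const>=0}. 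Feasibility of $u$ then yields $b^\rmT d=(a+b^\rmT u)-(a+b^\rmT u^*)\ge 0$. This is the complementary-slackness step, and it completes the proof.

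The only delicate point is the $b=0$ case, where the formula for $\lambda_i$ must be read with the convention $\lambda_i W_i b_i=0$, which Assumption~\ref{assum:b(x)_neq_0} justifies. Everything else is a direct computation.
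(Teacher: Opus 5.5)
Your proof is correct. It takes a more elementary route than the one the paper points to.

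\textbf{The paper's route.} The paper omits the proof and defers to \cite[Theorem 1]{safari2025time}. In Section~\ref{sec:derivation} it obtains $u_i^*$ from the first-order (KKT) conditions of a strictly convex QP with one affine constraint. The implied argument is that $u_i^*(x)$, with multiplier $\lambda_i(x)$, satisfies stationarity, primal and dual feasibility, and complementary slackness. Convexity then makes it a global minimizer, and strict convexity makes it unique.

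\textbf{Your route.} You complete the square and expand about $u^*$, which gives the exact identity $\tilde J_i(x,u)-\tilde J_i(x,u^*)=\lambda b^\rmT d+\tfrac12 d^\rmT W^{-1}d$. This is the Lagrangian sufficiency argument written out by hand, with complementary slackness making the first term nonnegative.

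\textbf{What each approach buys.} Your version is self-contained: it needs neither KKT sufficiency nor the existence of a minimizer. It also yields an explicit quadratic lower bound on the optimality gap. It handles $b_i(x)=0$ explicitly; there the quotient in \eqref{Gral_alg:Lambda_optim} is undefined, and Assumption~\ref{assum:b(x)_neq_0} is exactly what justifies reading $u_i^*(x)=k_i(x)$. The paper leaves that case implicit. The KKT route, in turn, is how the closed form for $u_i^*$ is derived in the first place, so it explains the formula rather than only verifying it.

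One ingredient in your computation is used tacitly: the symmetry of $W_i(x)$, needed for $(\lambda Wb)^\rmT W^{-1}d=\lambda b^\rmT d$. It holds because $R_i$ and $P_{i+1}$ are symmetric.
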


\section{Closed-Form Receding-Horizon Nonlinear Optimal Control with State Constraints}
\label{sect:control+CBF}

This section presents a receding-horizon optimal control using the closed-form C-ADP solution \cref{Gral_alg:ctrl_optim,Gral_alg:Lambda_optim,Gral_alg:W_optim,Gral_alg:c_optim,Gral_alg:P_optim,Gral_alg:T_optim,Gral_alg:AK_optim,Gral_alg:lambda_tild_optim,Gral_alg:u_tild_optim,Gral_alg:Fxu_tild_optim}. 
Consider the dynamic system
    \begin{equation}
		\Dot{x}(t)=f(x(t))+g(x(t))v(t),
		\label{eq:dyn}
    \end{equation}
    where $x(t) \in \mathbb{R}^n$ is the state; $x(0)=x_{0} \in \BBR^n$ is the initial condition; $v: [0, \infty) \rightarrow \mathbb{R}^{l_{v}}$ is the control; $f: \mathbb{R}^n \rightarrow \mathbb{R}^{n}$ and $g: \mathbb{R}^n \rightarrow \mathbb{R}^{n \times l_v}$ are continuously differentiable on $\mathbb{R}^{n}$.

Let $T >0$ be a prediction horizon, and for each $t \ge0$, define the receding-horizon cost
\begin{align}\label{eq:SJ}
\SJ(t,x,\hat v) &\triangleq 
\int_{t}^{t+T} \frac{1}{2} x(\tau)^\rmT Q(\tau) x(\tau) +\Gamma(\tau)^\rmT x(\tau)\nn\\
&\quad  + \frac{1}{2} \hat v(\tau)^\rmT R_{v}(\tau) \hat v(\tau) +\Omega_{v}(\tau)^\rmT \hat v(\tau) \, \rmd \tau,
\end{align}
where $x$ satisfy \eqref{eq:dyn} with $v=\hat v$, $Q(t) \in \mathbb{R}^{n \times n}$ is positive semidefinite, $R_{v}(t) \in \mathbb{R}^{l_v \times l_v}$ is positive definite, $\Omega_{v}(t) \in \mathbb{R}^{l_v}$, and $\Gamma(t) \in \mathbb{R}^{n}$.

Let $\psi_{0}:\mathbb{R}^{n} \to \mathbb{R}$ be continuously differentiable, and define the \textit{safe set}
\begin{equation}
C_{0}\triangleq \{x \hspace{1mm} | \hspace{1mm} \psi_{0}(x) \geq 0\}, \label{set:C0}
\end{equation}
which is the set of states that satisfy the state constraint. 

Our objective is to construct a feedback control $\hat v$ such that for all time $t \ge 0$, the cost $\SJ(t,x,\hat v)$ is approximately minimized subject to the state constraint $x(t) \in C_0$.
We address the problem using the closed-form C-ADP solution. 
Since \cref{Gral_alg:ctrl_optim,Gral_alg:Lambda_optim,Gral_alg:W_optim,Gral_alg:c_optim,Gral_alg:P_optim,Gral_alg:T_optim,Gral_alg:AK_optim,Gral_alg:lambda_tild_optim,Gral_alg:u_tild_optim,Gral_alg:Fxu_tild_optim} solves the optimization problem \cref{ADP:OCP:Main_prob} that has only one affine constraint at each step of the horizon, we consider a state constraint set that can be modeled as the zero-superlevel set of the single candidate CBF $\psi_{0}$. 
However, this approach can address safe sets that are the intersection of multiple zero-superlevel set (with potentially different relative degrees) by using a barrier function composition method. 
For example, see \cite{safari2025time,rabiee2024composition,rabiee2024closed,rabiee2023automatica,safari2025safe}, which use a log-sum-exponential soft-minimum function to compose multiple barrier functions. 
Similarly, this approach can address both state and input constraints by applying the approach in \cite{rabiee2024composition,rabiee2024closed,kamat2026electromagnetic},
which uses control dynamics to transform input constraints into controller-state constraints.

\subsection{Affine CBF Constraint Function for Safety}

We assume the relative degree of $\psi_{0}$ is at least $d$ at each point in $C_{0}$. 
Specifically, we make the following assumption.
    \begin{assumption}\rm
        There exists a positive integer $d$ such that for all $i \in \{0,1,...,d-2\}$ and all $x \in C_{0}$, $L_{g}L_{f}^{i}\psi_{0}(x)=0$; and there exists $x_{\rm{e}}\in C_{0}$ such that $L_{g}L_{f}^{d-1}\psi_{0}(x_{\rm{e}})\ne0$.
        \label{assum:rel_deg_d}
    \end{assumption}
    
Assumption \ref{assum:rel_deg_d} implies that the relative degree of $\psi_{0}$ is at least $d$ at each point in $C_{0}$. 
We use a higher-order approach to construct a candidate CBF. 
For all $j \in \{1,...,d-1\}$, let $\alpha_{j-1}:\mathbb{R}\to \mathbb{R}$ be a $(d-j)$-times continuously differentiable extended class-$\mathcal{K}$ function, and define
\begin{equation}
        \psi_j(x)\triangleq L_{f}\psi_{j-1}(x)+\alpha_{j-1}(\psi_{j-1}(x)).
        \label{eq:psii}
\end{equation}
For all $j \in \{1,...,d-1\}$, the zero-superlevel set of $\psi_{j}$ is given by
\begin{equation}
        C_{j}\triangleq\{x \hspace{1mm} | \hspace{1mm} \psi_{j}(x) \geq 0\}.
        \label{set:Ci}
\end{equation}
We make the following assumption.
\begin{assumption}\rm 
    For all $x \in \mbox{bd } C_{d-1}$, if $L_{g} \psi_{d-1}(x)=0$, then $L_f \psi_{d-1}(x) >0$. \label{assum:LgLf_neq_0}
\end{assumption}
Assumption~\ref{assum:LgLf_neq_0} is related to the constant-relative-degree assumption that is often invoked for CBF approaches. 
Here, we relax the assumption to require that $L_{g}\psi_{d-1}$ is nonzero only a subset of the boundary of $C_{d-1}$.
Next, define
\begin{equation}
C\triangleq \bigcap_{j=0}^{d-1}C_{j},
        \label{set:C}
\end{equation}
which is assumed to be nonempty and contain no isolated points. 
Assumption \ref{assum:LgLf_neq_0} implies that $C \subseteq C_0$ is control forward invariant.

Next, we construct a CBF-type constraint function on the control to ensure that $x(t) \in C$ for all $t \ge 0$. 
Let $\alpha:\mathbb{R}\to \mathbb{R}$ be locally Lipschitz and nondecreasing such that $\alpha(0)=0$.
Then, consider the CBF state-constraint function $\psi:\mathbb{R}^{n}\times \mathbb{R}^{l_v} \times \mathbb{R} \to \mathbb{R}$ given by
\begin{align}
\psi(x, \hat v, \hat \delta) &\triangleq L_{f}\psi_{d-1}(x)+L_{g}\psi_{d-1}(x) \hat v\nn\\
&\qquad +\alpha(\psi_{d-1}(x))+\psi_{d-1}(x) \hat \delta, \label{eq:DP_const_state}
\end{align}
where $\hat v$ is the control variable and $\hat \delta$ is a slack.
Define
\begin{equation*}
a(x)\triangleq L_{f}\psi_{d-1}(x)+\alpha(\psi_{d-1}(x)), \quad   b(x) \triangleq   \left [ \begin{smallmatrix} L_{g}\psi_{d-1}(x)\\ \psi_{d-1}(x) \end{smallmatrix} \right ],
\end{equation*}
and note that $\psi(x, \hat v, \hat \delta) = a(x) + b(x)^{\rm{T}} \hat u$, where $\hat u = \left [ \begin{smallmatrix} \hat v \\ \hat \delta \end{smallmatrix} \right ]$.
The constraint $a(x) + b(x)^{\rm{T}} \hat u \ge 0$ is sufficient to enforce $x(t) \in C$.

The cost \eqref{eq:SJ} is quadratic in $\hat v$, whereas the constraint $a(x) + b(x)^{\rm{T}} \hat u \ge 0$ is a function of $\hat v$ and $\hat \delta$. 
Thus, let $r_{\delta} > 0$ and consider the regularized cost 
\begin{equation}\label{eq:SJbar}
    \bar \SJ(t,x,\hat u) = \SJ(t,x,\hat v) + \frac{r_{\delta}}{2} \int_t^{t+T} \hat \delta(\tau)^2 \, \rmd \tau.
\end{equation}


\subsection{Receding-Horizon C-ADP Control}

A receding-horizon implementation of \cref{Gral_alg:c_optim,Gral_alg:P_optim,Gral_alg:T_optim,Gral_alg:W_optim,Gral_alg:AK_optim,Gral_alg:ctrl_optim,Gral_alg:Lambda_optim,Gral_alg:u_tild_optim,Gral_alg:Fxu_tild_optim,Gral_alg:lambda_tild_optim} is used to construct a control $\hat v$ and slack $\hat \delta$ that approximately minimizes $\bar \SJ(t,x,\hat u)$ subject to the constraint $\psi(x,\hat v,\hat \delta) \ge 0$. 
To apply \cref{Gral_alg:c_optim,Gral_alg:P_optim,Gral_alg:T_optim,Gral_alg:W_optim,Gral_alg:AK_optim,Gral_alg:ctrl_optim,Gral_alg:Lambda_optim,Gral_alg:u_tild_optim,Gral_alg:Fxu_tild_optim,Gral_alg:lambda_tild_optim}, we discretize \eqref{eq:dyn}.
Let $T_{\rm{p}}>0$ be the planning time step such that $N=T/T_\rmp$ is an integer. 
In other words, $T_\rmp$ is the time step associated with the subscript $i$ in the optimization \cref{ADP:OCP:Main_prob}.
Then, let
    \begin{equation}
        x_{i+1}=f_{\rm{d}}(x_{i})+g_{\rm{d}}(x_{i})v_{i}.
        \label{eq:dyn_sol}
    \end{equation}
be the discretize approximation of \cref{eq:dyn_sol}. 
For example, $f_{\rm{d}}$ and $g_{\rm{d}}$ can be obtained by discretizing \eqref{eq:dyn} using the forward-Euler method. 
In this case, $f_{\rm{d}}(x) = x + T_\rmp f(x)$ and $g_\rmd(x) = T_\rmp g(x)$. 
Higher-order or approximate zero-order-hold methods can also be used to improve accuracy of \eqref{eq:dyn_sol} if needed.
Although $f_{\rm{d}}$ and $g_{\rm{d}}$ are used to construct the optimal functions $u^*_0,\ldots,u^*_N$, the safety constraint $\psi(x,\hat v,\hat \delta) \ge 0$ is based on continuous-time dynamics \eqref{eq:dyn}.  

Since the optimization variable $\hat u$ includes control $\hat v$ and slack $\hat \delta$, let
\begin{equation}\label{eq:F_G}
    F(x) = f_\rmd(x), \qquad G(x) = \begin{bmatrix} g_\rmd(x) & 0_{n\times1} \end{bmatrix}.
\end{equation}
Thus, $x_{i+1} = F(x_i) + G(x_i) u_i$, where $u_i = \left [ \begin{smallmatrix} v_i \\ \delta_i \end{smallmatrix} \right ]$.

The C-ADP functions $u^*_0,\ldots,u^*_N$ are updated using \cref{Gral_alg:c_optim,Gral_alg:P_optim,Gral_alg:T_optim,Gral_alg:W_optim,Gral_alg:AK_optim,Gral_alg:ctrl_optim,Gral_alg:Lambda_optim,Gral_alg:u_tild_optim,Gral_alg:Fxu_tild_optim,Gral_alg:lambda_tild_optim} with update period $T_\rms\in (0,T_\rmp]$.
In other words, $u^*_0,\ldots,u^*_N$ can be updated faster than the sample time used to discretize the horizon.
We use subscript $k\in\BBN$ to indicate the update step. 
For example, $J_{k}$ is the cost for the optimization \cref{ADP:OCP:Main_prob} on step $k$, which corresponds to $kT_\rmp$ in real time. 
Similarly, $u^*_{0,k},\ldots,u^*_{N,k}$ are the optimal functions obtained from \cref{Gral_alg:c_optim,Gral_alg:P_optim,Gral_alg:T_optim,Gral_alg:W_optim,Gral_alg:AK_optim,Gral_alg:ctrl_optim,Gral_alg:Lambda_optim,Gral_alg:u_tild_optim,Gral_alg:Fxu_tild_optim,Gral_alg:lambda_tild_optim} on time step $k$.
For each $k\in\BBR$, $J_{k}$ is obtained by discretizing \eqref{eq:SJbar}. 
Specifically, $J_{k}$ is given by \cref{ADP:OCP:Main_prob}, where
\begin{gather*}
    Q_{i,k} = Q(kT_\rms+iT_\rmp), \quad     \Gamma_{i,k} = \Gamma(kT_\rms+iT_\rmp),\\
    R_{i,k} = \begin{bmatrix} R_{v}(kT_\rms+iT_\rmp) & 0_{l_v \times 1} \\ 0_{1 \times l_v}& r_{\delta} \end{bmatrix}, \,\,     \Omega_{i,k} = \begin{bmatrix} \Omega(kT_\rms+iT_\rmp) \\ 0 \end{bmatrix}.
\end{gather*}
Then, for each $k\in\BBN$, the optimal functions $u^*_{0,k},\ldots,u^*_{N,k}$ are obtained from \cref{Gral_alg:c_optim,Gral_alg:P_optim,Gral_alg:T_optim,Gral_alg:W_optim,Gral_alg:AK_optim,Gral_alg:ctrl_optim,Gral_alg:Lambda_optim,Gral_alg:u_tild_optim,Gral_alg:Fxu_tild_optim,Gral_alg:lambda_tild_optim}, where $a_i=a$, $b_i=b$, and $\{\bar{x}_{i,k}\}_{i\in \mathcal{N}}$ is the nominal trajectory used on the $k$th step. 

Finally, for all $k \in \mathbb{N}$ and all $t \in [kT_{\rm{s}},(k+1)T_{\rm{s}})$, the optimal control $v_*:[0,\infty) \times \mathbb{R}^{n}\rightarrow \mathbb{R}^{\ell}$ is given by
\begin{equation}\label{eq:v*}
    v_*(t,x) \triangleq \begin{bmatrix}  I_{l_v} & 0_{l_v} \end{bmatrix} u^*_{0,k}(x),
\end{equation}
and the optimal slack parameter  $\delta_*:[0,\infty) \times \mathbb{R}^{n}\rightarrow \mathbb{R}^{\ell}$ is 
\begin{equation}
    \delta_*(t,x) \triangleq \begin{bmatrix}  0_{1 \times l_v} & 1 \end{bmatrix} u^*_{0,k}(x),
\end{equation}
Note that $v_*$ and $\delta_*$ are piecewise continuous in $t$ and continuous in $x$ on $\BBR^n$. 
In addition, $v_*$ and $\delta_*$ are continuous in $t$ on each interval $[kT_\rms,(k+1)T_\rm)$ but are generally discontinuous at $kT_\rms$, which corresponds to the $u^*_{0,k}$ update. 
A smooth homotopy between consecutive optimal functions  (e.g., 
\cite{safari2025time,rabiee2023automatica}) can be used instead of the the update \eqref{eq:v*}.
If $F^\prime$, $G^\prime$, and $\psi_{d-1}^\prime$ are locally Lipschitz on $\SD \subset \BBR^n$, then \cref{Gral_alg:c_optim,Gral_alg:W_optim,Gral_alg:ctrl_optim,Gral_alg:Lambda_optim} imply that $v_*$ and $\delta_*$ are locally Lipschitz on $\SD$.

The following theorem is the main result constraint satisfaction.
The proof is similar to \cite{safari2025time} and is omitted due to space limitations.

\begin{theorem} \label{thm:safety} \rm
Consider \eqref{eq:dyn}, where Assumptions~\ref{assum:rel_deg_d} and \ref{assum:LgLf_neq_0} are satisfied.
Let $v=v_*$, where $v_*$ is given by \eqref{eq:v*}.
Assume $F^\prime$, $G^\prime$, and $\psi_{d-1}^\prime$ are locally Lipschitz on $C$.
Then, for all $x_0 \in C$, the following hold:
\begin{enumerate}

\item There exists $t_{\rm{m}}>0$ such that \eqref{eq:dyn} with $v=v_*$ has a unique solution on $[0,t_{\rm{m}})$.

\item For all $t\in[0,t_{\rm{m}})$, $x(t)\in C \subseteq C_{0}$.

\end{enumerate}
\end{theorem}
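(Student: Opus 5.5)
The plan is to follow the standard CBF forward-invariance argument used for \cite[Theorem~1]{safari2025time}. The two ingredients are local existence and uniqueness, which follow from Lipschitz continuity of the closed-loop vector field, and a comparison-lemma argument applied to the chain $\psi_{d-1},\ldots,\psi_0$.

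\textbf{Part (a).} First I would show that $v_*$ is well defined and locally Lipschitz in $x$ on each interval $[kT_\rms,(k+1)T_\rms)$. Well-definedness on $C$ comes from Assumption~\ref{assum:b(x)_neq_0} applied to $a,b$.

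\begin{itemize}
\item Assumption~\ref{assum:LgLf_neq_0} only speaks about $\mbox{bd}\, C_{d-1}$. However, $b(x)=0$ requires $\psi_{d-1}(x)=0$, since $b$ contains $\psi_{d-1}(x)$ as its last component. So $b(x)=0$ forces $x\in\mbox{bd}\, C_{d-1}$ and $L_g\psi_{d-1}(x)=0$, and then $a(x)=L_f\psi_{d-1}(x)>0$.
\item Consequently the quotient in \eqref{Gral_alg:Lambda_optim} is only singular where $b=0$. There the numerator $-a-b^\rmT k=-a<0$, so the max is $0$ near such points.
\item I would handle this with a continuity argument showing that $\lambda_0$ remains locally Lipschitz near such points. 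This is the main technical obstacle. I would treat the max as vanishing on a neighborhood where $a>0$ dominates, and use that $b^\rmT W b>0$ elsewhere.
\end{itemize}

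Given $F',G',\psi_{d-1}'$ locally Lipschitz, the maps $W_i,k_i,\lambda_i$ are compositions of locally Lipschitz maps, so $v_*(t,\cdot)$ is locally Lipschitz. Moreover $v_*$ is piecewise constant in $t$ with respect to the switching index $k$. Applying Picard–Lindelöf on $[0,T_\rms)$ and restarting at each $kT_\rms$ from the terminal state gives a unique solution on a maximal interval $[0,t_{\rm m})$.

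\textbf{Part (b).} Theorem~\ref{th:ADP:const>=0}, applied with $a_0=a$ and $b_0=b$, gives $\psi(x,v_*(t,x),\delta_*(t,x))\ge0$ for all $x$ and $t$. Along solutions this reads
\[
\dot\psi_{d-1}(x(t)) \ge -\alpha(\psi_{d-1}(x(t))) - \psi_{d-1}(x(t))\,\delta_*(t,x(t)).
\]
The right-hand side has the form $-\beta(t,\psi_{d-1})$ with $\beta(t,0)=0$, locally Lipschitz in its second argument and piecewise continuous in $t$. Comparing with the scalar ODE $\dot y=-\beta(t,y)$, $y(0)=\psi_{d-1}(x_0)\ge0$, whose solution stays nonnegative because $y\equiv0$ is a solution and uniqueness holds, yields $\psi_{d-1}(x(t))\ge0$ on $[0,t_{\rm m})$.

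I would then descend the chain. Assumption~\ref{assum:rel_deg_d} gives $\dot\psi_{j-1}=L_f\psi_{j-1}$ on $C_0$, so $\psi_j\ge0$ implies $\dot\psi_{j-1}\ge-\alpha_{j-1}(\psi_{j-1})$. Together with $\psi_{j-1}(x_0)\ge0$ and the comparison lemma for extended class-$\mathcal K$ functions, this gives $\psi_{j-1}(x(t))\ge0$.

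Running the induction from $j=d-1$ down to $j=1$ shows $x(t)\in C_j$ for all $j$, hence $x(t)\in C\subseteq C_0$. One subtlety is that the relative-degree identity is needed only on $C_0$, which is where the trajectory is being shown to remain. I would handle this by a first-exit-time contradiction argument.
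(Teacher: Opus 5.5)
Your route matches the argument the paper defers to \cite{safari2025time}: local Lipschitz continuity of $v_*$ and restarting at each $kT_{\rm s}$ for (a), then Theorem~\ref{th:ADP:const>=0}, a scalar comparison for $\psi_{d-1}$, and a descent through $\psi_{d-2},\ldots,\psi_0$ for (b). The step you call the main obstacle is routine, and your sketch of it is correct: near a point with $b=0$ and $a>0$ the numerator in \eqref{Gral_alg:Lambda_optim} is negative, so $\lambda_0\equiv 0$ on a neighborhood. Two other steps do not go through as written.

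First, $b(x)=0$ only gives $\psi_{d-1}(x)=0$, and that does not put $x$ in $\mathrm{bd}\,C_{d-1}$. If $\psi_{d-1}$ vanishes at an interior point of $C_{d-1}$, that point is a local minimum. Then $\psi_{d-1}'(x)=0$, so $a(x)=b(x)=0$. The quotient defining $\lambda_0$ is $0/0$ there and is generally unbounded nearby, so $\lambda_0W_0b$ can be discontinuous. Hence Assumption~\ref{assum:b(x)_neq_0} for $(a,b)$ cannot be derived from Assumption~\ref{assum:LgLf_neq_0}, and neither can well-definedness or local Lipschitz continuity of $v_*$ on $C$. You must take it as a hypothesis, which is implicit in saying $v_*$ is given by \eqref{eq:v*}, or assume $\{\psi_{d-1}=0\}=\mathrm{bd}\,C_{d-1}$.

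Second, and more important, the end of (b) has a gap. The descent needs $\dot\psi_{j-1}=L_f\psi_{j-1}$ along the trajectory, i.e., $L_g\psi_{j-1}(x(t))=0$ for $j-1\le d-2$. Writing $L_g\psi_{j-1}$ as a combination of $L_gL_f^{i}\psi_0$, $i\le j-1$ (which you should spell out), Assumption~\ref{assum:rel_deg_d} gives this only at points of $C_0$. A first-exit-time argument does not repair this by itself. Let $\tau$ be the first time $x$ leaves $C$. The contradiction requires $\dot\psi_{j-1}\ge-\alpha_{j-1}(\psi_{j-1})$ on some interval $(\tau,\tau+\epsilon)$. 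On that interval $x(t)$ may already be outside $C_0$, where $L_g\psi_{j-1}(x)\,v_*$ need not vanish. Continuity only makes that term small near $x(\tau)$. At the level where the exit occurs, $\alpha_{j-1}(\psi_{j-1})$ is small too, so the scalar comparison cannot absorb the term without a quantitative bound.

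The argument does close if $L_gL_f^i\psi_0=0$, $i\le d-2$, holds on an open set $\mathcal{U}\supseteq C$. That is the usual neighborhood meaning of ``relative degree at least $d$ at each point of $C_0$,'' which is how the paper glosses Assumption~\ref{assum:rel_deg_d}. Then $x(t)\in\mathcal{U}$ on $[0,\tau+\epsilon)$ by openness, and your comparisons from $j=d-1$ down to $j=1$ on that interval contradict the definition of $\tau$. State this reading explicitly. Under the literal ``$x\in C_0$'' version you need an additional estimate on $L_g\psi_{j-1}$ off $C_0$, not just an exit-time argument.
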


Theorem~\ref{thm:safety} shows that $v_*$ ensures safety, and Theorem~\ref{thm:optimal} describes the optimality property of the sequence of functions $u^*_{0,k}$ that constitute $v_*$.

In this receding-horizon implementation, the nominal trajectory $\{\bar{x}_{i,k}\}_{i\in \mathcal{N}}$ is obtained on each update $k$ in a forward pass through \eqref{eq:dyn_sol} using the latest optimal control functions $u^{*}_{i,k-1}$.
Specifically, let $\{\bar{x}_{i,0}\}_{i\in \mathcal{N}}$ be the initial nominal trajectory, and let $\{u^{*}_{i,0}\}_{i \in \mathcal{N}}$ be the solution to \cref{Gral_alg:c_optim,Gral_alg:P_optim,Gral_alg:T_optim,Gral_alg:W_optim,Gral_alg:AK_optim,Gral_alg:ctrl_optim,Gral_alg:Lambda_optim,Gral_alg:u_tild_optim,Gral_alg:Fxu_tild_optim,Gral_alg:lambda_tild_optim} using $\{\bar{x}_{i,0}\}_{i\in \mathcal{N}}$.
Then, for all $k \in \{1,2,...\}$, $\{\bar{x}_{i,k}\}_{i\in \mathcal{N}}$ is the solution to
\begin{equation}
        \bar{x}_{i+1,k}=F(\bar{x}_{i,k})+G(\bar{x}_{i,k})u^{*}_{i,k-1}(\bar{x}_{i,k}), \quad \bar{x}_{0,k}=x(kT_{\rm{s}}). 
        \label{eq:cbf_ctrl:forward_pass}
\end{equation}
Receding-horizon C-APD control is summarized in Alg.~\ref{alg:CADP}.


    \begin{algorithm}[t]
    \DontPrintSemicolon
    \caption{Receding-Horizon C-ADP Control}
    \label{alg:CADP}
    \KwIn{$\{\bar{x}_{i,0}\}_{i\in\mathcal{N}},Q,\Gamma,R_{v},\Omega_{v},r_{\delta},T_{\rm p},T,a,b,F, G$}
    \KwOut{$v_{*}(t,x)$}
    \BlankLine
    \For{$k \leftarrow 0,\,1,\,2,\,\ldots$}{
        \If{$k > 0$}{
            \tcp{Forward pass}
            $\{\bar{x}_{i,k}\}_{i \in \mathcal{N}}
                \;\leftarrow\; $
                \textit{\cref{eq:cbf_ctrl:forward_pass}}\;
        }
        \tcp{Backward pass}
        $\{u^{*}_{i,k}\}_{i \in \mathcal{N}}
            \;\leftarrow\;$            \textit{\cref{Gral_alg:ctrl_optim,Gral_alg:Lambda_optim,Gral_alg:c_optim,Gral_alg:W_optim,Gral_alg:c_optim,Gral_alg:P_optim,Gral_alg:T_optim,Gral_alg:AK_optim,Gral_alg:u_tild_optim,Gral_alg:Fxu_tild_optim,Gral_alg:lambda_tild_optim}}\;

        \tcp{Extract control function}
        $v_{*}(t,x)\;\leftarrow\; $ \textit{\cref{eq:v*}}
    }
    \end{algorithm}

\section{Nonholonomic Ground Robot}
\label{sec:robot}

    Consider the nonholonomic differential drive mobile robot modeled by \eqref{eq:dyn}, where
    \begin{gather*}
    x = \begin{bmatrix}
                q_{\rm{x}}\\
                q_{\rm{y}}\\
                \gamma\\
                s \\
                \omega
    \end{bmatrix}, \qquad
    f(x)=\begin{bmatrix}
                s\cos{\gamma}-l_{\rm{d}}\omega\sin{\gamma} \\
                s\sin{\gamma}+l_{\rm{d}}\omega\cos{\gamma} \\
                \omega \\
                -c_1 s - c_2 s^2 \tanh \frac{s}{\varepsilon_1}  \\
                -c_3 \omega - c_4 \omega^2 \tanh \frac{\omega}{\varepsilon_2}
    \end{bmatrix}, \\
    g(x) = \begin{bmatrix} 0_{3 \times 2} \\ M \end{bmatrix},\quad 
        M \triangleq \frac{k_{\rmm}}{rR_\rma} \begin{bmatrix} 
            \frac{1}{m} & \frac{1}{m} \\
             \frac{l}{I} &  - \frac{l}{I}
        \end{bmatrix}, \quad
    u = \begin{bmatrix}
    u_{\rm{r}} \\ u_{\rm{l}}
        \end{bmatrix},\\
       c_1\triangleq\frac{2k_{\rm{b}}k_{\rm{m}}}{mrR_{\rm{a}}}+\frac{2\varepsilon_3}{mr},\quad   
     c_3\triangleq\frac{k_{\rm{b}}k_{\rm{m}}l^2}{Ir^2R_{\rm{a}}}+\frac{l\varepsilon_3}{Ir^2},  
    \end{gather*}
    where 
    $q\triangleq [q_{\rm{x}} \quad q_{\rm{y}}]^{\rm{T}}$ is the position of a point of interest on the robot in an orthogonal coordinate frame, 
    $\gamma$ is the direction of the velocity vector, 
    $s$ is the speed,
    $\omega$ is the angular velocity, 
    $u_{\rm{r}}$ and $u_{\rm{l}}$ are the control voltage applied to the right and left motors, 
    $k_{\rm{m}}=0.1$ N-m/Amp is the torque constant, 
    $r=0.1$~m is the wheel radius, 
    $l=0.5$~m is the distance between wheels, 
    $l_{\rm{d}}=0.25$~m is the distance from the mass center to the point of interest, 
    $R_{\rm{a}}=0.27$~ohms is the armature resistance, 
    $m=10$~kg is the mass,
    $I=0.83$ kg-$\rm{m^{2}}$ is the moment of inertia, 
    $k_{\rm{b}} = 0.0487$~V-s/rad is the motor back-EMF constant,
    $\varepsilon_3 = 0.01$~N-m-s is the friction coefficient,
    $c_2=0.4581$ $\mathrm{m}^{-1}$,
    $c_4 = 0.3477$, and
    $\varepsilon_{1}=\varepsilon_{2} =0.4$. 
    This model includes linear damping from the back-EMF of the motor and friction of the ground, as well as drag damping. 
    This model is from \cite{anvari2013non}.

Consider the map in Fig.~\ref{fig:GR:map}.
The objective is to navigate to a goal state $x_\rmd \triangleq [ \, q_\rmd^\rmT \quad 0_{3\times 1} \, ]^\rmT$, where $q_{\rm{d}} \in \BBR^3$ is the goal loation, and satisfy state constraints. 
The robot must not collide with obstacles, its speed must satisfy $|s| \leq \bar{s}=1.5$ m/s, and its angular velocity must satisfy $|\omega|\leq \bar{\omega}=0.5$ rad/s. 
These constraints are modeled with barrier functions.
Specifically, the area outside each obstacle is modeled as the zero-superlevel set of $\phi_{1}(x),...,\phi_{41}(x)$, and the bounds on $s$ and $\omega$ are modeled by $h_{42}(x)\triangleq \bar{s}^2-s^2$ and $h_{43}(x)\triangleq\bar{\omega}^2-\omega^2$.
Note that $h_{42}$ and $h_{43}$ have relative degree 1, whereas $\phi_{1},...,\phi_{41}$ have relative degree 2.
To address multiple barrier functions with different relative degrees, we adopt the approach in \cite{rabiee2024closed,rabiee2024composition}, which uses a log-sum-exponential function to compose multiple barrier functions. 
To compose the barrier functions with different degrees, we use a higher-order approach with $\phi_{1},...,\phi_{41}$, because they have relative degree two. 
For all $i\in \{1,2,...,41\}$, define $h_{i}(x)\triangleq L_{f}\phi_{i}(x)+\zeta\phi_{i}(x)$, where $\zeta=0.5$, and note that $h_i$ is relative degree one. 
Then, let $\rho>0$, and consider the safe set $C_{0}$ given by \eqref{set:C0}, where
\begin{equation}
\psi_{0}(x)\triangleq - \frac{1}{\rho}\log\Big(\sum_{i=1}^{n_h} \textrm{e}^{-\rho h_{i}(x)}\Big),
        \label{eq:GR:CBF}
\end{equation}
where $n_h=43.$
The log-sum-exponential \eqref{eq:GR:CBF} is a smooth lower-bound approximation of $\min\{h_{1},...,h_{n_h}\}$ with worst-case approximation error bounded by $\rho^{-1}\log n_h$ \cite{safari2025time}.
Selecting $\rho$ is a trade-off between the conservativeness of the approximation and numerical conditioning (see \cite{safari2025time,rabiee2024composition,rabiee2024closed}). 
We let $\rho=750$.


We compare receding-horizon C-ADP control with other several approaches.
Specifically, we compare the following:

\noindent \textbf{Method 1. C-ADP}:
Algorithm~\ref{alg:CADP} is implemented with $\eta=1$, $Q = \mathrm{diag}(1, 1, 0, 16,160)$, $\Gamma = Q x_{\rm{d}}$, $R_v=\mathrm{diag}(80,80)$, $\Omega_v = 0_{2 \times 1}$, and $r_{\delta}= 0.2\times10^{10}$.
The horizon is $T=20$~s with $T_\rmp=0.05$~s, which yields $N=400$, and $T_\rms = T_\rmp$. 


\noindent \textbf{Method 2. CiLQR+CBF}: 
CiLQR from \cite{iLQR_Bharath2} is implemented to compute a desired control $v_{\mathrm{d}}$ using a single forward-and-backward pass per update, Levenberg–Marquardt regularization, and the line-search capability in \cite{iLQR_Bharath2} disabled. 
The horizon $N$, planning time step $T_\rmp$, update period $T_\rms$, and cost are the same as in Method~1. 
As common in CiLQR, the constraints are added to the cost as exponentials (i.e., $e^{-h_{42}(x)}+e^{-h_{43}(x)}+\sum_{i=1}^{41} e^{-\phi_i(x)}$). 
Since CiLQR does not guarantee constraint satisfaction, the implemented control $v$ is computed from a minimum-intervention QP with the CBF constraint \eqref{eq:DP_const_state}.
Specifically, 
\begin{equation}\label{GR:alg:CilQR_1}
\begin{aligned}
& (v(x),\delta(x))=\arg \min_{\hat v,\hat \delta} \| \hat v - v_{\rm{d}}(x)\|^{2} +r_{\delta} \hat \delta^2 \\
            & \qquad \mbox{subject to \eqref{eq:DP_const_state}}.
\end{aligned}
\end{equation}


\noindent \textbf{Method 3. CiLQR+LS+CBF}:
This is the same as Method~2, but the  line-search capability is enabled, where up to 10 forward-and-backward passes are allowed per update.

\noindent \textbf{Method 4. NaiveControl+CBF}:
This method uses an analytic desired control $v_\rmd$ that accomplishes destination seeking in an obstacle-free environment. 
However, $v_\rmd$ is naive to obstacles, so the implemented control $v$ is computed from the minimum intervention QP \eqref{GR:alg:CilQR_1}. 
This desired control is 
\begin{equation*}
        \resizebox{0.9\columnwidth}{!}{$
        v_{\rm{d}}(x) \triangleq \big(M^{\rm{T}}M\big)^{-1}M^{\rm T}
        \begin{bmatrix}
            \dot{s}_{\rm{d}}(x)-c_1 s - c_2 s^2 \tanh \frac{s}{\varepsilon_1} \\
            \dot{\omega}_{\rm{d}}(x)-c_3 \omega - c_4 \omega^2 \tanh \frac{\omega}{\varepsilon_2}
        \end{bmatrix},
        $}
\end{equation*}
where $k_{\rm{p}}=0.4$, $k_{\rm{d}}=0.75$, and 
\begin{align*}
\dot{s}_{\rm{d}}(x) &\triangleq
                    \begin{bmatrix}
                        \cos{\gamma} && \sin{\gamma}
                    \end{bmatrix} a_{\rm{d}} + l_{\rm{d}}\omega^{2},\\
\dot{\omega}_{\rm{d}}(x) &\triangleq
                   l_{\rm{d}}^{-1}\left ( \begin{bmatrix}
                            -\sin{\gamma} &&  \cos{\gamma}
                        \end{bmatrix} a_{\rm{d}}-s\omega \right ),\\
a_{\rm{d}}(x)&\triangleq -k_{\rm{p}}\tanh \left ( q - q_{\rm{d}} \right ) - k_{\rm{d}} \tanh 
            \dot q.
\end{align*}

C-ADP control naturally satisfies the constraints (see Theorem~\ref{thm:safety}).
For the other methods, constraints are not necessarily satisfied by the desired control $v_\rmd$, which is why the CBF filter \eqref{GR:alg:CilQR_1} is used.
For all methods, the control $v$ is implemented with a zero-order hold at 100 Hz.

   \begin{figure}[t]
        \centering
        \includegraphics[width=\columnwidth]{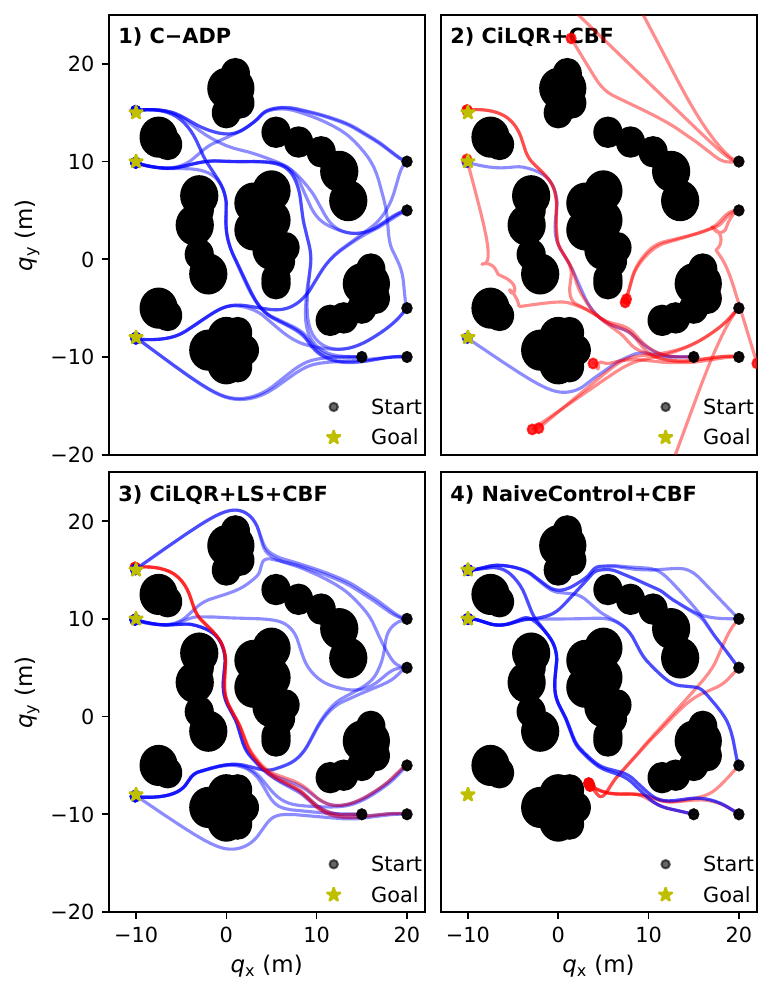}
        \caption{All 15 trajectories for each method. 
        Blue lines indicate trajectories that reach $q_\rmd$, whereas red lines do not.}
        \label{fig:GR:map}
    \end{figure}

Each method is evaluated using 15 trials (5 initial positions paired with 3 goal locations).
We say that the robot reaches $q_\rmd$ if the distance error $\| q - q_\rmd \|$ reaches and stays within $d_{\rm tol} = 0.25$~m by the final time $T_{\rm{f}}=120$~s. 
Figure~\ref{fig:GR:map} shows the map with the 15 trajectories for each method. 
For each method, the robot safely navigate obstacles.
All trials with C-ADP reach $q_\rmd$, whereas the success rates with Methods~2, 3, and 4 are 2, 12, and 10 out of 15, respectively. 
For Method 3, the 3 trials do not reach $q_\rmd$ reach $0.5$~m error; however, they do not get closer.



The computation time (mean, standard deviation) on an Acer Nitro AN515-57 laptop (Intel Core i5-11400H @ 2.70 GHz, 16 GB RAM) are $(51, 2.1)$ ms, $(8.1, 0.5)$ ms, $(62, 19)$ ms, and $(2.2, 0.2)$ ms for Methods~1, 2, 3 and 4, respectively. 
Thus, C-ADP compares favorably with Method 3 (CiLQR with line search), which is the only method with comparable performance. 
Note that CiLQR from \cite{iLQR_Bharath2} is computationally optimized, whereas our implementation of C-ADP is not. 
The large standard deviation for Method 3 is due to adaptive regularization and line search, which vary the number of forward-backward passes per step.

To compare the methods in more detail, we consider 3 performance metrics. 
First, the success indicator (SI), which is equal to 0 if the robot reaches $q_\rmd$ and equal to 1 otherwise. 
Second, the arrival time (AT), which is the time at which the robot reaches $q_\rmd$.
We let $\textrm{AT}=T_{\rm{f}}$ if the robot does not reach $q_{\rm{d}}$. 
Third, the total cost is 
\begin{equation*}
\textrm{TC} \triangleq \int_0^{T_\rmf} [x(t)-x_\rmd]^\rmT R [x(t)-x_\rmd] + v(t)^\rmT Q_v v(t) \, \rmd t.
\end{equation*}
For each method and each trial, we also consider 3 metrics related to the control.
The control magnitude (CM), control derivative (CD), and control intervention (CI) are 
    \begin{equation*}
        \textrm{CM} \triangleq \max_{t} \| v(t) \| , \,
        \textrm{CD} \triangleq \max_{t} \| \dot{v}(t) \|, \,
        \textrm{CI} \triangleq \max_{t} \| \delta v(t) \|,
    \end{equation*}
where $\delta v \triangleq v-v_{\mathrm{d}}$. 
For C-ADP control, it follows from \cref{Gral_alg:ctrl_optim,Gral_alg:Lambda_optim,Gral_alg:W_optim,Gral_alg:c_optim,Gral_alg:P_optim,Gral_alg:T_optim} that $v_\rmd$ is the value obtained with $\lambda_{0,k}=0$. 
Next, we normalize these metrics across all 60 trials (4 methods, 15 trials). 
For each $j \in I \triangleq \{1,2,\ldots,60\}$, let 
\begin{equation*}
\tilde{\rm AT}_{j} \triangleq  
        \frac{ \max_{\ell \in I} \, {\rm AT}_{\ell}-{\rm AT}_{j}}{ \max_{\ell \in I} \, {\rm AT}_{\ell}-\min_{\ell \in I} \, {\rm AT}_{\ell}}.
\end{equation*}
Note that $\tilde{\rm AT}_{j} =1$ if trial $j$ arrives the fastest of the 60 trials, whereas $\tilde{\rm AT}_{j}=0$ if trial $j$ arrives the slowest (i.e., does not arrive). 
Similarly, let $\tilde{\rm FI}_{j}$, $\tilde{\rm TC}_{j}$, $\tilde{\rm CM}_{j}$, $\tilde{\rm CD}_{j}$, and $\tilde{\rm CI}_{j}$ denote the other normalized metrics. 
For each metric, 1 is best and 0 is worst. 
Figure~\ref{fig:GR:perf_rad} provides radar plots summarizing the metrics for each algorithm. 
The shaded region is the average across all 15 trials and the dashed lines show each trial.

Figure~\ref{fig:GR:perf_rad} shows that C-ADP provides the best overall results followed by CiLQR with line search. 
In contrast, CiLQR without line search has the worst performance metrics (TC, FI, and AT). 
It is notable that line search significantly improves the performance of CiLQR because line-search approaches can also be implemented with the C-ADP control method. 
We note that since the C-ADP control has a closed-form expression, its can be implemented with a faster zero-order hold without reducing the update time $T_\rms \in (0,T_\rmp]$. 
In contrast, CiLQR is limited by $T_\rms$ in the sense that the control obtained from CiLQR is constant for $T_\rms$ (although the minimum intervention CBF can be implemented faster as it was in this example). 


 \begin{figure}
       \centering
       \includegraphics[width=1.0\linewidth]{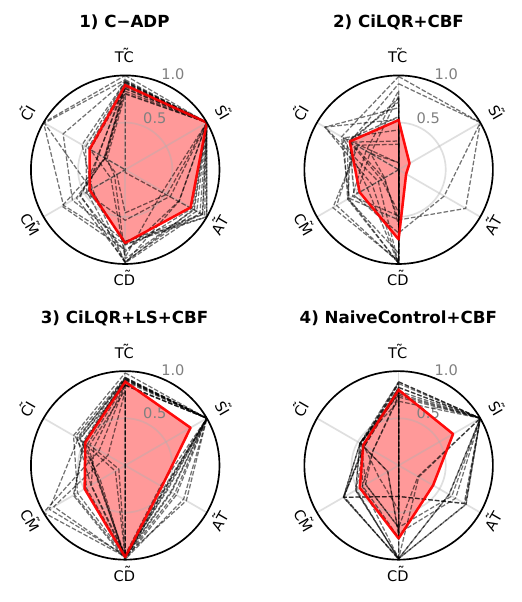}
       \caption{Performance and control metrics.
       Red lines show the mean, and the dashed lines show the 15 trials.}
       \label{fig:GR:perf_rad}
\end{figure}

    \bibliographystyle{ieeetr}
    \bibliography{References}

@article{bertsekas2011dynamic,
  title={Dynamic programming and optimal control 3rd edition, volume ii},
  author={Bertsekas, Dimitri P and others},
  journal={Belmont, MA: Athena Scientific},
  volume={1},
  year={2011}
}

@inproceedings{ames2019control,
  title={Control barrier functions: Theory and applications},
  author={Ames, Aaron D and Coogan, Samuel and Egerstedt, Magnus and Notomista, Gennaro and Sreenath, Koushil and Tabuada, Paulo},
  booktitle={Proc. Euro. Contr. Conf.},
  pages={3420--3431},
  year={2019}
}

@inproceedings{cosner2021measurement,
  title={Measurement-robust control barrier functions: Certainty in safety with uncertainty in state},
  author={Cosner, Ryan K and Singletary, Andrew W and Taylor, Andrew J and Molnar, Tamas G and Bouman, Katherine L and Ames, Aaron D},
  booktitle={IEEE/RSJ Int. Conf. Int. Robots Sys.},
  pages={6286--6291},
  year={2021}
}

@article{breeden2023robust,
  title={Robust control barrier functions under high relative degree and input constraints for satellite trajectories},
  author={Breeden, Joseph and Panagou, Dimitra},
  journal={Automatica},
  volume={155},
  pages={111109},
  year={2023},
  publisher={Elsevier}
}

@article{xiao2022event,
  title={Event-triggered control for safety-critical systems with unknown dynamics},
  author={Xiao, Wei and Belta, Calin and Cassandras, Christos G},
  journal={IEEE Trans. Autom. Contr.},
  volume={68},
  number={7},
  pages={4143--4158},
  year={2022},
  publisher={IEEE}
}

@article{wang2023multi,
  title={Multi-UAV Safe Collaborative Transportation Based on Adaptive Control Barrier Function},
  author={Wang, Zhijun and Hu, Tengfei and Long, Lijun},
  journal={IEEE Trans. Sys., Man, and Cyb.: Systems},
  year={2023},
  publisher={IEEE}
}

@article{zheng2023constrained,
  title={Constrained moving path following control for UAV with robust control barrier function},
  author={Zheng, Zewei and Li, Jiazhe and Guan, Zhiyuan and Zuo, Zongyu},
  journal={IEEE/CAA Automatica Sinica},
  volume={10},
  number={7},
  pages={1557--1570},
  year={2023},
  publisher={IEEE}
}

@article{rabiee2023automatica,
  title={Soft-minimum and soft-maximum barrier functions for safety with actuation constraints},
  author={Rabiee, Pedram and Hoagg, Jesse B},
  journal={Automatica},
  year={2024}
}

@techreport{anvari2013non,
  title={Non-holonomic differential drive mobile robot control \& design: Critical dynamics and coupling constraints},
  author={Anvari, Iman},
  year={2013},
  institution={Arizona State University}
}

@article{rabiee2024closed,
  title={A Closed-Form Control for Safety Under Input Constraints Using a Composition of Control Barrier Functions},
  author={Rabiee, Pedram and Hoagg, Jesse B},
  journal={arXiv preprint arXiv:2406.16874},
  year={2024}
}

@article{safari2025time,
  title={Time-varying soft-maximum barrier functions for safety in unmapped and dynamic environments},
  author={Safari, Amirsaeid and Hoagg, Jesse B},
  journal={IEEE Trans. on Contr. Sys. Tech.},
  year={2025},
  publisher={IEEE}
}

@article{ames2016control,
  title={Control barrier function based quadratic programs for safety critical systems},
  author={Ames, Aaron D and Xu, Xiangru and Grizzle, Jessy W and Tabuada, Paulo},
  journal={IEEE Trans. Autom. Contr.},
  volume={62},
  number={8},
  pages={3861--3876},
  year={2016},
  publisher={IEEE}
}

@article{wieland2007constructive,
  title={Constructive safety using control barrier functions},
  author={Wieland, Peter and Allg{\"o}wer, Frank},
  journal={IFAC Proc. Volumes},
  volume={40},
  number={12},
  pages={462--467},
  year={2007},
  publisher={Elsevier}
}

@book{borrelli2017predictive,
  title={Predictive control for linear and hybrid systems},
  author={Borrelli, Francesco and Bemporad, Alberto and Morari, Manfred},
  year={2017},
  publisher={Cambridge University Press}
}

@article{bemporad2002model,
  title={Model predictive control based on linear programming\~{} the explicit solution},
  author={Bemporad, Alberto and Borrelli, Francesco and Morari, Manfred and others},
  journal={IEEE Trans. Autom. Contr.},
  volume={47},
  number={12},
  pages={1974--1985},
  year={2002}
}

@article{tondel2003algorithm,
  title={An algorithm for multi-parametric quadratic programming and explicit {MPC} solutions},
  author={T{\o}ndel, Petter and Johansen, Tor Arne and Bemporad, Alberto},
  journal={Automatica},
  volume={39},
  number={3},
  pages={489--497},
  year={2003},
  publisher={Elsevier}
}

@article{prajna2007framework,
  title={A framework for worst-case and stochastic safety verification using barrier certificates},
  author={Prajna, Stephen and Jadbabaie, Ali and Pappas, George J},
  journal={IEEE Trans. Autom. Contr.},
  volume={52},
  number={8},
  pages={1415--1428},
  year={2007},
  publisher={IEEE}
}

@inproceedings{sleiman2021constraint,
  title={Constraint handling in continuous-time ddp-based model predictive control},
  author={Sleiman, Jean-Pierre and Farshidian, Farbod and Hutter, Marco},
  booktitle={Int. Conf. on Rob. and Autom. (ICRA)},
  pages={8209--8215},
  year={2021},
  organization={IEEE}
}

@inproceedings{aoyama2021constrained,
  title={Constrained differential dynamic programming revisited},
  author={Aoyama, Yuichiro and Boutselis, George and Patel, Akash and Theodorou, Evangelos A},
  booktitle={Int. Conf. on Rob. and Autom. (ICRA)},
  pages={9738--9744},
  year={2021},
  organization={IEEE}
}

@article{mare2007solution,
  title={Solution of the input-constrained LQR problem using dynamic programming},
  author={Mare, Jos{\'e} B and De Don{\'a}, Jos{\'e} A},
  journal={Sys, Contr. Letts.},
  volume={56},
  number={5},
  pages={342--348},
  year={2007},
  publisher={Elsevier}
}

@book{rao2019engineering,
  title={Engineering optimization: {T}heory and practice},
  author={Rao, Singiresu S},
  year={2019},
  publisher={Wiley}
}

@article{ha2022novel,
  title={A novel value iteration scheme with adjustable convergence rate},
  author={Ha, Mingming and Wang, Ding and Liu, Derong},
  journal={IEEE Trans. on Neur. Net. and Learn. Sys.},
  volume={34},
  number={10},
  pages={7430--7442},
  year={2022},
  publisher={IEEE}
}

@article{xiao2015online,
  title={Online optimal control of unknown discrete-time nonlinear systems by using time-based adaptive dynamic programming},
  author={Xiao, Geyang and Zhang, Huaguang and Luo, Yanhong},
  journal={Neurocomputing},
  volume={165},
  pages={163--170},
  year={2015},
  publisher={Elsevier}
}

@article{wang2010adaptive,
  title={Adaptive Dynamic Programming for Finite-Horizon Optimal Control of Discrete-Time Nonlinear Systems With $\varepsilon$-Error Bound},
  author={Wang, Fei-Yue and Jin, Ning and Liu, Derong and Wei, Qinglai},
  journal={IEEE Trans. on Neur. Net.},
  volume={22},
  number={1},
  pages={24--36},
  year={2010}
}

@article{guo2017policy,
  title={Policy approximation in policy iteration approximate dynamic programming for discrete-time nonlinear systems},
  author={Guo, Wentao and Si, Jennie and Liu, Feng and Mei, Shengwei},
  journal={IEEE Trans. on Neur. Net. and Learn. Sys.},
  volume={29},
  number={7},
  pages={2794--2807},
  year={2017}
}

@inproceedings{rabiee2025guaranteed,
  title={Guaranteed-safe mppi through composite control barrier functions for efficient sampling in multi-constrained robotic systems},
  author={Rabiee, Pedram and Hoagg, Jesse B},
  booktitle={Proc. Conf. Dec. Contr.},
  pages={5515--5520},
  year={2025}
  }

@misc{iLQR_Bharath2,
  author       = {Bharath Irigireddy},
  title        = {i{LQR}: Iterative Linear Quadratic Regulator implementation},
  howpublished = {\url{https://github.com/Bharath2/iLQR}},
  year         = {2020},
  note         = {Accessed: 2026-03-19},
}

@inproceedings{rabiee2024composition,
  title={Composition of control barrier functions with differing relative degrees for safety under input constraints},
  author={Rabiee, Pedram and Hoagg, Jesse B},
  booktitle={Proc. Amer. Contr. Conf.},
  pages={3692--3697},
  year={2024},
}

@article{yuan2020solver,
  title={Solver--critic: A reinforcement learning method for discrete-time-constrained-input systems},
  author={Yuan, Xin and Dong, Lu and Sun, Changyin},
  journal={IEEE Trans. Cybern.},
  volume={51},
  number={11},
  pages={5619--5630},
  year={2020}
}

@article{yang2019approximate,
  title={Approximate dynamic programming for nonlinear-constrained optimizations},
  author={Yang, Xiong and He, Haibo and Zhong, Xiangnan},
  journal={IEEE Trans. Cybern.},
  volume={51},
  number={5},
  pages={2419--2432},
  year={2019}
}

@article{kamat2026electromagnetic,
  title={Electromagnetic Formation Flying Using Alternating Magnetic Field Forces and Control Barrier Functions for State and Input Constraints},
  author={Kamat, Sumit S and Seigler, T Michael and Hoagg, Jesse B},
  journal={IEEE Trans. Aerosp. Electron. Syst.},
  year={2026}
}

@inproceedings{safari2025safe,
  title={Safe Navigation in Unmapped Environments for Robotic Systems with Input Constraints},
  author={Safari, Amirsaeid and Hoagg, Jesse B},
  booktitle={Proc. Conf. Dec. Contr.},
  pages={6957--6962},
  year={2025},
  organization={IEEE}
}

@inproceedings{li2004iterative,
  title={Iterative linear quadratic regulator design for nonlinear biological movement systems},
  author={Li, Weiwei and Todorov, Emanuel},
    booktitle={Int. Conf. Informatics Control Autom. Robot.},
  volume={2},
  pages={222--229},
  year={2004}
}

 \end{document}